\documentclass[a4paper,UKenglish,cleveref,autoref,thm-restate]{lipics-v2021}

\hideLIPIcs  


\usepackage{newfile}
\usepackage{tikz}
\usetikzlibrary{babel,automata, positioning, arrows}

\usepackage{bussproofs}
\usepackage{braket}
\usepackage{stmaryrd}

\usepackage[inline]{enumitem}

\usepackage{float}
\usepackage{tcolorbox}
\usepackage[obeyDraft]{todonotes}
\usepackage[ruled,linesnumbered]{algorithm2e}

\newcommand{\D}{\mathcal{D}}

\newcommand{\bd}{\mathsf{bd}}

\makeatletter
\providecommand{\bigsqcap}{%
  \mathop{%
    \mathpalette\@updown\bigsqcup
  }%
}
\newcommand*{\@updown}[2]{%
  \rotatebox[origin=c]{180}{$\m@th#1#2$}%
}
\providecommand{\plusop}{%
  \mathop{%
    \mathpalette\@updown\pm
  }%
}
\makeatother

\newcommand\PMet{\mathsf{PMet}}

\newcommand\supp{\mathrm{supp}}

\newcommand{\apart}[1]{\mathrel{\#_{#1}}}
\newcommand\subobj\rightarrowtail

\newcommand\fin\overline

\newcommand\paren[1]{\left({#1}\right)}
\newcommand\tuple\paren%

\DeclareMathOperator{\lfp}{lfp}

\newcommand{\interp}[1]{\llbracket #1 \rrbracket}
\newcommand{\UIQ}{[0,1]_\mathbb{Q}}
\newcommand{\Lnot}{\mathcal{L}}

\newcommand{\PTs}[1]{\mathscr{T}_{#1}}

\newcommand{\slab}[2]{\(#1;#2\)}
\newcommand{\supps}{\mathcal{S}}
\newcommand{\false}{\mathsf{false}}
\newcommand{\Hpoly}{H_{x,y}^d}

\newcommand{\ruben}[1]{\todo[size=\footnotesize]{ruben: #1}}


\newcommand{\zero}{%
\AxiomC{\vphantom{\(y \apart{\varepsilon} x\)}}
\RightLabel{(zero)}
\UnaryInfC{\(x \apart{0} y\)}
\DisplayProof%
}

\newcommand{\lab}{%
\AxiomC{\(l(x) \neq l(y)\)}
\RightLabel{(lab)}
\UnaryInfC{\(x \apart{1} y\)}
\DisplayProof%
}
\newcommand{\expr}{%
\AxiomC{\(h \colon \supps \to \UIQ\)}
\noLine
\UnaryInfC{\(\forall x', y' \in \supps \ldotp h(x') > h(y') \implies x' \apart{h(x') - h(y')} y'\)}
\AxiomC{\(\tau(x) \cdot h - \tau(y) \cdot h \geq \varepsilon\)}
\RightLabel{(exp)}
\BinaryInfC{\(x \apart{\varepsilon} y\)}
\DisplayProof%
}
\newcommand{\pt}[1]{\mathscr{T}(#1)}


\newcommand{\takeout}[1]{}

\newboolean{proofsinappendix}
\setboolean{proofsinappendix}{false}

\newboolean{showappendix}
\setboolean{showappendix}{false}

\makeatletter
\ifthenelse{\boolean{proofsinappendix}}{%
  \newoutputstream{proofstream}
  \openoutputfile{\jobname-proofs.out}{proofstream}
  \NewEnviron{proofappendix}[1]{%
	\addtostream{proofstream}{\csname #1*\endcsname}%
    \addtostream{proofstream}{\noexpand\begin{proof}}%
    \addtostream{proofstream}{\noexpand\takeout{
      WARNING: this file is auto generated from the main tex file!
      Do not edit manually!
    }}
    \expandafter\addtostream{proofstream}{\expandonce{\BODY}}%
    \addtostream{proofstream}{\noexpand\end{proof}}%
    \addtostream{proofstream}{}%
  }%
}{%
  \newenvironment{proofappendix}[1]{%
    \begin{proof}
  }{
    \end{proof}
  }
}
\makeatother

\bibliographystyle{plainurl}

\title{Constructing Witnesses for Lower Bounds\\on Behavioural Distances}

\titlerunning{Constructing Witnesses for Lower Bounds on Behavioural Distances} 

\author{Ruben Turkenburg}{Radboud University, Netherlands}{ruben.turkenburg@ru.nl}{https://orcid.org/0000-0001-7336-9405}{}
\author{Harsh Beohar}{University of Sheffield, UK}{h.beohar@sheffield.ac.uk}{https://orcid.org/0000-0001-5256-1334}{EPSRC Grant: EP/X019373/1 and Royal Society Grant: IES\textbackslash R3\textbackslash 223092}
\author{Franck van Breugel}{York University, Toronto, Canada}{franck@yorku.ca}{https://orcid.org/0009-0002-7320-1527}{Natural Sciences and Engineering Research Council of Canada}
\author{Clemens Kupke}{University of Strathclyde, UK}{clemens.kupke@strath.ac.uk}{https://orcid.org/0000-0002-0502-391X}{Leverhulme Trust Research Project Grant RPG-2020-232}
\author{Jurriaan Rot}{Radboud University, Netherlands}{jrot@cs.ru.nl}{https://orcid.org/0000-0002-1404-6232}{}

\authorrunning{R.~Turkenburg, H.~Beohar, F.~van~Breugel, C.~Kupke, J.~Rot} 



\ccsdesc[500]{Theory of computation~Probabilistic computation}
\ccsdesc[500]{Theory of computation~Logic}
\ccsdesc[500]{Theory of computation~Modal and temporal logics}

\keywords{Behavioural Distances, Markov Chains, Apartness} 

\category{} 

\relatedversion{} 


\funding{This work was partially supported by NWO grant OCENW.M20.053.}

\acknowledgements{This work has benefitted from Dagstuhl Seminar 24432: Behavioural Metrics and Quantitative Logics.}

\nolinenumbers 

\EventEditors{John Q. Open and Joan R. Access}
\EventNoEds{2}
\EventLongTitle{42nd Conference on Very Important Topics (CVIT 2016)}
\EventShortTitle{CVIT 2016}
\EventAcronym{CVIT}
\EventYear{2016}
\EventDate{December 24--27, 2016}
\EventLocation{Little Whinging, United Kingdom}
\EventLogo{}
\SeriesVolume{42}
\ArticleNo{23}

\begin{document}

\maketitle

\begin{abstract}
	Behavioural distances provide a robust alternative to notions of equivalence such as bisimilarity in the context of probabilistic transition systems.
	They can be defined as least fixed points, whose universal property allows us to exhibit upper bounds on the distance between states, showing them to be \emph{at most} some distance apart.

	In this paper, we instead consider the problem of bounding distances from below, showing states to be \emph{at least} some distance apart.
	Contrary to upper bounds, it is possible to reason about lower bounds inductively.
	We exploit this by giving an inductive derivation system for lower bounds on an existing definition of behavioural distance for labelled Markov chains.
	This is inspired by recent work on \emph{apartness} as an inductive counterpart to bisimilarity.
	Proofs in our system will be shown to closely match the behavioural distance by soundness and (approximate) completeness results.

	We further provide a constructive correspondence between our derivation system and formulas in a modal logic with quantitative semantics.
	This logic was used in recent work of Rady and van Breugel to construct evidence for lower bounds on behavioural distances.
	Our constructions provide smaller witnessing formulas in many examples.
\end{abstract}

\section{Introduction}
Bisimilarity is an important notion of equivalence in the study of state-transition systems. It is \emph{qualitative} in the sense that states are either considered equivalent, or not; there are no degrees of equivalence. When studying systems involving probabilistic transitions, such qualitative definitions are usually considered too strict; states may be \emph{inequivalent} or \emph{distinguishable} under bisimilarity despite their behaviour being difficult to distinguish by an observer (this problem was first noted by Giacalone, Jou, and Smolka~\cite{DBLP:conf/ifip2/GiacaloneJS90}).

To better capture the (in)equivalence of states, quantitative notions of \emph{behavioural distances/metrics} may be used~\cite{DBLP:conf/concur/DesharnaisGJP99,DBLP:journals/siglog/Breugel17,DBLP:journals/tcs/BreugelW05}.
These assign to each pair of states a number (e.g., in the interval \([0,1]\)) representing how close (or how far apart) their behaviours are.
Determining these distances has been studied algorithmically, with procedures developed for approximating the distance~\cite{DBLP:journals/lmcs/BreugelSW08}, and the exact computation of distances~\cite{DBLP:conf/fossacs/ChenBW12,DBLP:journals/lmcs/Bacci0LM17,DBLP:journals/lmcs/BacciBLMTB21,tang2018computing}.
The definition of behavioural distances as least or greatest fixed points (depending on the chosen ordering), also gives them a universal property yielding a (co)inductive proof principle~\cite{DBLP:journals/iandc/HermidaJ98,DBLP:journals/lmcs/BaldanBKK18,DBLP:journals/mscs/Bonchi0P23}. The corresponding proofs are of bounds on the distances showing states to be equivalent to some degree.
This is analogous to the qualitative proof technique of exhibiting some bisimulation containing a pair of states, thereby showing them to be bisimilar.

\subparagraph{Apartness}
Orthogonally, interest in (qualitative) apartness of states has been growing as an inductive counterpart to bisimilarity~\cite{DBLP:journals/lmcs/GeuversJ21}.
Rather than defining when states behave the same, apartness defines when there is some observable difference between them.
A reason for interest in apartness is its inductive and potentially constructive nature. Indeed, this was the original motivation, going back to the school of Brouwer~\cite{heyting1966intuitionism}.
In the setting of state-based systems, this means giving some (finite) evidence or witness for a difference in behaviours. Think of, for example, a word which is accepted by one state of a finite automaton but not another, or a difference in probability of making a certain observation in probabilistic systems such as labelled Markov chains or Markov decision processes.

As in the case of bisimilarity, we would like these notions to be as robust as possible, making behavioural distances a clear area of interest.
In the quantitative setting, the dual of the existing coinductive proof methods allows us to obtain lower bounds on distances between states, i.e., we can show states to be at least a certain distance apart.
This type of bound has recently been explored in~\cite{DBLP:journals/lmcs/BaldanEKP23}. They define a measure of how much a candidate for the least fixed point can be decreased. If no such decrease is possible, we have a lower bound.

In this work, we take an alternative approach, based essentially on Kleene's chain construction of least fixed points~\cite{Kleene1952Introduction}. For the case of behavioural distances, this starts from an order-preserving functional, say \(\Gamma \colon \PMet_X \to \PMet_X\), on the space of pseudometric spaces on a set \(X\). To approach the least fixed point \(\mu \Gamma\) from below, we can start from the constant zero distance \(\bot\) and iteratively apply \(\Gamma\) giving the chain \(\bot \leq \Gamma(\bot) \leq \Gamma^2(\bot) \leq \ldots\). As is noted in~\cite{DBLP:journals/lmcs/BaldanEKP23}, fully applying \(\Gamma\) iteratively in this way is not a desirable means of obtaining bounds. Instead, we will develop an inductive derivation system allowing the construction of lower bounds for chosen pairs of states.

However, simply translating the definition of the behavioural distance to proof rules does not work; to obtain a usable derivation system we must show that proof steps need only consider direct successors of the involved states.
This means that finite derivations can be constructed also for systems with infinite states space.
We further reduce the work required for proofs, and thus also the size of the derivations, by showing that recursive proofs are only required for a subset of these successors.

The judgments proved in our system are of the form \(x \apart{\varepsilon} y\) for \(x,y\) states in an LMC and \(\varepsilon\) a rational in the interval \([0,1]\). To ensure a correct relation to the behavioural distance, we show two properties. First, soundness: that if we can prove \(x \apart{\varepsilon} y\), then \(\varepsilon\) is a lower bound on the behavioural distance between \(x\) and \(y\). Second, a form of completeness which we call \emph{approximate} completeness.
Usual completeness with respect to the behavioural distance would state that any distance between states can be proved in the derivation system. However, in the spirit of apartness, we consider \emph{finite} evidence, which can only witness finite approximations of distances in general. We thus show that lower bounds can be derived with arbitrarily small error with respect to the true distance.

\subparagraph{Logics}
Evidence of differences in behaviour can also be given in the form of formulas in some modal logic. This is closely related to Hennessy-Milner type theorems, which show for a given logic that bisimilarity and logical equivalence coincide.
Such theorems have been shown for probabilistic bisimilarity and a logic with a modality parameterised by rational probabilities~\cite{DBLP:conf/lics/DesharnaisEP98,DBLP:journals/iandc/DesharnaisEP02,DBLP:conf/icalp/FijalkowKP17}.
Later, a quantitative analogue was shown by relating a real-valued logic introduced in~\cite{DBLP:conf/concur/DesharnaisGJP99} to behavioural distances~\cite{DBLP:journals/tcs/BreugelHMW07}.
The most interesting part in the qualitative case is the inclusion of logical equivalence in bisimilarity, also called expressiveness, which can dually be shown by giving, for each pair of non-bisimilar states, a formula which distinguishes them. Quantitatively, this means giving a formula for which the difference in interpretations matches the behavioural distance as closely as possible.

This correspondence of behavioural distances and modal formulas has been investigated for \emph{labelled Markov chains} (LMCs) in~\cite{DBLP:conf/fossacs/RadyB23}. In their terminology, a construction is given of formulas ``explaining'' the distance between states. Due to the chosen logic, and the possibility of infinite behaviours in LMCs induced by loops, this can not be done exactly. Instead, it is shown that for any finite approximation \(\Gamma^i(\bot)(x,y)\) of the distance of states as in the above chain, a formula can be constructed such that \(|\interp{\varphi}(x) - \interp{\varphi}(y)|\) (the difference in interpretations on states \(x\) and \(y\)) is equal to the approximation.

We finish by relating our new derivation system to the work of~\cite{DBLP:conf/fossacs/RadyB23} by showing that for any derivation, a formula in the modal logic of \emph{op.\ cit.} can be constructed which witnesses the same bound. These witnessing formulas are an improvement as they can be given for infinite state systems, and they will be smaller in many examples. Further, for any formula a proof tree witnessing the same lower bound can be given whose depth will be equal to the modal depth of the formula. For more fine-grained notions of size counting the total number of steps in a derivation and number of operators in a formula, the derivations will be larger in general, as they are dependent on the system and thus contain more information.
This is exactly what facilitates the aforementioned improvements; we see the steps which lead us to conclude a difference in behaviours which are otherwise somewhat hidden in the semantics of the logic.
Proofs also focus on pairs of states, so that at each step we see which states of a system are being used to exhibit a lower bound.

\subparagraph{Contributions}
Summarising the contributions of the paper:
\begin{itemize}
	\item We define an inductive derivation system for lower bounds on behavioural distances in labelled Markov chains (\cref{sec:proofsys})
	\item We show the soundness and approximate completeness of the system with respect to the behavioural distance (\cref{sec:sound,sec:appcomp})
	\item We show a constructive correspondence between proofs in our system and formulas of a modal logic (\cref{sec:logic,sec:ftop})
\end{itemize}
We illustrate all of the above with examples, including a system with an infinite state space which was not covered by this modal logic (\cref{ex:randwalk}).

\subparagraph{Related Work} 
The line of work focussing on proofs of apartness for state-based systems was (re)started by Geuvers and Jacobs~\cite{DBLP:journals/lmcs/GeuversJ21}, with further work on the relation to distinguishing formulas in~\cite{DBLP:conf/birthday/Geuvers22}.
A proof system for an apartness notion dual to coalgebraic behavioural equivalence has been given in~\cite{DBLP:conf/cmcs/TurkenburgBKR24}.\ruben{anonymised}
The current paper builds directly on these works.

The coinductive proof principle in the context of behavioural distances has been explored coalgebraically in, e.g.,~\cite{DBLP:journals/entcs/Rutten98,DBLP:journals/acta/BonchiPPR17,DBLP:journals/lmcs/BaldanBKK18,DBLP:journals/mscs/Bonchi0P23}. In the greatest fixed point characterisation, the ordering is reversed compared to the definition we use in the rest of this work. Coinduction thus leads to upper bounds on distances under our definition.
An approach focussed on bounding greatest fixed points from above (but which dually bounds least fixed points from below as we will do) has more recently been given in~\cite{DBLP:journals/lmcs/BaldanEKP23}, as discussed above.
There is however no construction given of formulas demonstrating proved bounds.

The construction of distinguishing formulas has also been studied in the qualitative setting in, e.g.,~\cite{DBLP:conf/concur/0001G23,DBLP:conf/birthday/0001G24,DBLP:conf/cmcs/0001MS20,DBLP:journals/lmcs/WissmannMS22}.
We discuss these works further in \cref{sec:conclude}.

A more general account of bounding distances from above is the area of quantitative equational theories~\cite{DBLP:conf/lics/MardarePP16}, which has been applied to give a calculus for upper bounds on distances in Markov chains~\cite{DBLP:journals/lmcs/BacciBLM18} and regular expressions~\cite{DBLP:conf/icalp/Rozowski24}.

\subparagraph*{Notation}
We will write, \(\D_\mathbb{Q}(X)\) for the set of finitely-supported rational distributions on the set \(X\). These are maps \(\mu \colon X \to [0,1] \cap \mathbb{Q}\) such that \(\supp(\mu) := \{ x \in X \mid \mu(x) \neq 0 \}\) is finite and \(\sum_{x \in X} \mu(x) = 1\). We may also write such distributions as formal sums: \(\sum_{x \in X} \mu(x) \ket{x}\), with the \(\ket{x}\) (``ket'') notation taken from~\cite{JacobsStructProb} and used simply to delineate states and their associated probabilities. It can also be thought of as indicating a sum of Dirac distributions. From now on, we will write \(\UIQ\) for \([0,1] \cap \mathbb{Q}\).

We denote by \(\PMet_X\) the set of pseudometric spaces on a set \(X\), i.e., pairs \((X,d)\) with \(d \colon X \times X \to [0,1]\) a pseudometric.
We order the unit interval with the usual ordering of the reals, and pseudometrics inherit this ordering pointwise, so that \(d_1 \leq d_2\) iff \(\forall x,y \in X \ldotp d_1(x,y) \leq d_2(x,y)\).
The smallest element \(\bot\) is thereby the constant zero distance.
Further, for two pseudometric spaces \((X,d_X) \in \PMet_X\) and \((Y,d_Y) \in \PMet_Y\) a map \(f \colon (X,d_X) \to (Y,d_Y)\) will be assumed to be \emph{non-expansive} (also called \emph{1-Lipschitz}, \emph{short}, etc.), i.e., \(\forall x,y \in X \ldotp d_Y(f(x),f(y)) \leq d_X(x,y)\).
The Euclidean distance is denoted \(d_e \colon [0,1] \times [0,1] \to [0,1]\).

In the interest of space, we write \(\mu \cdot h\) for \(\sum_{x \in X} \mu(x) \cdot h(x)\) where \(\mu \colon X \to \UIQ\) is a distribution and \(h \colon X \to \mathbb{R}\) is an arbitrary function. This is motivated by viewing the distributions and functions as vectors indexed by their common domain, so that the operation is the vector dot product. In the sequel, we will often restrict \(h\) to maps into \(\UIQ\).

\section{Behavioural Distances on LMCs}\label{sec:behdist}
We start with the definition of the type of system which we study in the remainder of the paper: labelled Markov chains.

\begin{definition}
	A labelled Markov chain (LMC) consists of the following data:
	\begin{itemize}
		\item a set of states \(X\);
		\item a (non-empty) set of labels \(L\);
		\item a (finitely branching) probabilistic transition function \(\tau \colon X \to \D_\mathbb{Q}(X)\); and
		\item a labelling function \(l \colon X \to L\)
	\end{itemize}
\end{definition}

\begin{example}
	Let \(X = \{x, x_1, x_2\}\) and \(L = \{a,b\}\). We represent the LMC \((X,L,\tau,l)\) with \(\tau(x) = \frac{1}{2} \ket{x_1} + \frac{1}{2} \ket{x_2}, \tau(x_1) = 1 \ket{x_1}, \tau(x_2) = 1 \ket{x_2}\) and \(l(x) = l(x_1) = a, l(x_2) = b\) as: 
	\begin{center}
		\begin{tikzpicture}[shorten >=0pt,node distance=2cm,
				on grid,auto,initial text=,every edge/.append style={bend angle=15}]
			\node[state] (x) {\slab{x}{a}} ;
			\node[state] (x1) [left=of x] {\slab{x_1}{a}} ;
			\node[state] (x2) [right=of x] {\slab{x_2}{b}} ;

			\path[->]
			(x) edge node[swap] {\(\frac{1}{2}\)} (x1)
			(x) edge node {\(\frac{1}{2}\)} (x2)

			;
		\end{tikzpicture}
	\end{center}
	We use the notation \slab{x}{a} for a state \(x \in X\) such that \(l(x) = a\). Further, any state with no outgoing edges is assumed to have a self-loop with probability \(1\).
\end{example}

We now recall a definition of the behavioural distance (henceforth written \(\bd\)) of states in an LMC as the least fixed point of a functional based on non-expansive maps.
This distinguishes two cases:
states having different labels should be maximally far apart, so they have distance \(1\);
the distance of states with the same label is then defined recursively, and can be seen as an optimisation problem.
Intuition for this problem is most often given in terms of its dual based on couplings under the Kantorovich-Rubinstein duality. The distance between distributions can in that setting be seen as the minimal cost of transporting one unit of mass from one distribution to the other, with the cost of transporting \(m\) probability mass between states at distance \(d\) being \(m \cdot d\).
In a more general form (as discussed in~\cite{villani2008optimal}) the distance below can be seen as the maximisation of profit, where we think of buying from one distribution and selling to the other, with the maps \(h\) representing buying/selling costs.

On small examples, these intuitions can often be used to determine the distance by examination; on larger systems a solver for linear programs is usually necessary.
For this, the definition below can be transformed into a rational linear program, as we use in the proof of \cref{lem:nstep}.
For further discussions of these distances, see for example~\cite{DBLP:journals/siglog/Breugel17, DBLP:conf/fossacs/RadyB23} 
\begin{definition}
	For \(X\) a set, and \(\PMet_X\) the set of pseudometric spaces on \(X\), we define \(\Gamma \colon \PMet_X \to \PMet_X\) by
	\begin{align*}
		\Gamma(d)(x,y) = \begin{cases}
			                 1,                                                                           & \text{ if } l(x) \neq l(y), \\
			                 \sup_{h \colon (X,d) \to ([0,1],d_e)} \tau(x) \cdot h - \tau(y) \cdot h, & \text{ o.w.}
		                 \end{cases}
	\end{align*}
	Then we define \(\bd := \lfp(\Gamma)\).
\end{definition}
Note that the least fixed point exists, because \(\PMet_X\) is a complete lattice, and \(\Gamma\) preserves the pointwise order on \(\PMet_X\).

\begin{example}\label{ex:LMC}
	Consider the following LMC:
	\begin{center}
		\begin{tikzpicture}[shorten >=0pt,node distance=2cm,
				on grid,auto,initial text=,every edge/.append style={bend angle=15}]
			\node[state] (x) {\slab{x}{a}} ;
			\node[state] (x1) [left=of x] {\slab{x_1}{a}} ;
			\node[state] (x2) [right=of x] {\slab{x_2}{b}} ;
			\node[state] (y) [right=7cm of x] {\slab{y}{a}} ;
			\node[state] (y1) [left=of y] {\slab{y_1}{a}} ;
			\node[state] (y2) [right=of y] {\slab{y_2}{b}} ;

			\path[->]
			(x) edge node[swap] {\(\frac{1}{2}\)} (x1)
			(x) edge node {\(\frac{1}{2}\)} (x2)
			(y) edge node[swap] {\(\frac{2}{5}\)} (y1)
			(y) edge node {\(\frac{3}{5}\)} (y2)
			;
		\end{tikzpicture}
	\end{center}
	Note that \(\bd(x_1,y_1) = \bd(x_2,y_2) = 0\) and \(\bd(x_1,y_2) = \bd(x_2,y_1) = 1\), which are the values given by \(\Gamma(\bot)\). The value \(\bd(x,y)\) is then \(\Gamma^2(\bot)(x,y)\) for which it can be shown that the supremum is achieved by the map \(h_0(z) = \textbf{if } z \in \{x_1,y_1\} \textbf{ then } 1 \textbf{ else } 0\) so that:
	\begin{align*}
		\Gamma^2(\bot)(x,y) & = \sup_{h \colon (X,\Gamma(\bot)) \to ([0,1],d_e)} \tau(x) \cdot h - \tau(y) \cdot h                                         \\
		                    & = \tau(x) \cdot h_0 - \tau(y) \cdot h_0                                                                                      \\
		                    & = \left(\frac{1}{2} \cdot 1 + \frac{1}{2} \cdot 0\right) - \left(\frac{2}{5} \cdot 1 + \frac{3}{5} \cdot 0\right) = \frac{1}{10}
	\end{align*}
	In terms of transportation costs, this \(\frac{1}{10}\) can be seen as the cost of transporting the \(\frac{1}{10}\) of mass from \(x_1\) to \(y_2\) which can not be transported to \(y_1\) due to its lack of capacity.
	Intuition based on profits is harder to give in this case, due to the asymmetry of our definition. It is easier to obtain the \(h\) in this example by seeing it as grouping the states into ``equivalence'' classes, spaced as far apart as possible while respecting the distance on \(X\) (in this case \(\Gamma(\bot)\)).
\end{example}

It will be important for the correspondence results of later sections that the behavioural distance can be obtained as a countable supremum, namely the supremum over all finite applications of \(\Gamma\) to the constant zero distance.
A similar result for LMCs with non-determinism is shown already in~\cite[Sec.~3]{DBLP:conf/lics/DesharnaisJGP02}.
It can also be proved using the Kleene fixpoint theorem, or
\(\omega\)-(co)continuity of \(\Gamma\) as shown in~\cite{DBLP:journals/ipl/Breugel12}.
\begin{proposition}\label{prop:kleene}
	For any LMC \((X,L,\tau,l)\) and \(x,y \in X\), we have
	\begin{equation*}
		\bd(x,y) = \sup_{i < \omega} \Gamma^i(\bot)(x,y)
	\end{equation*}
\end{proposition}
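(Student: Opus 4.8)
The plan is to identify $d^\ast := \sup_{i<\omega}\Gamma^i(\bot)$ with the least fixed point of $\Gamma$ via the Knaster--Tarski characterisation of $\lfp(\Gamma)$ as the least pre-fixed point. Since $\PMet_X$ is a complete lattice, the supremum $d^\ast$ exists and is again a pseudometric valued in $[0,1]$ (for the triangle inequality one uses that the chain is ascending, which holds by monotonicity of $\Gamma$). One inequality is immediate: if $d$ is any pre-fixed point of $\Gamma$, then $\bot\le d$ and inductively $\Gamma^i(\bot)\le d$, hence $d^\ast\le d$; taking $d=\bd$ gives $d^\ast\le\bd$. For the converse it suffices to show that $d^\ast$ is itself a pre-fixed point, $\Gamma(d^\ast)\le d^\ast$, since then $\bd=\lfp(\Gamma)\le d^\ast$. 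Writing $d_i:=\Gamma^i(\bot)$ and noting $\Gamma(d_i)=d_{i+1}$, so that $\sup_i\Gamma(d_i)=d^\ast$, what remains is the $\omega$-continuity inequality
\[
  \Gamma(\textstyle\sup_i d_i)(x,y)\;\le\;\sup_i\Gamma(d_i)(x,y)\qquad\text{for all }x,y\in X
\]
(the reverse holds by monotonicity and is not needed).

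When $l(x)\ne l(y)$ both sides equal $1$, so assume $l(x)=l(y)$ and unfold $\Gamma$: the left-hand side is $\sup_h\bigl(\tau(x)\models h-\tau(y)\models h\bigr)$ over $d^\ast$-nonexpansive $h\colon X\to[0,1]$. The main obstacle, and the only place finite branching is used, is to convert a near-optimal $d^\ast$-nonexpansive $h$ into a $d_i$-nonexpansive one for large enough $i$ without losing much value. The key observations are: (i) $\tau(x)\models h$ and $\tau(y)\models h$ depend only on $h$ restricted to the finite set $S:=\supp(\tau(x))\cup\supp(\tau(y))$; (ii) since $S$ is finite and $d_i(u,v)$ increases to $d^\ast(u,v)$ for each of the finitely many pairs $u,v\in S$, for any $\eta>0$ one can pick $i$ with $d^\ast(u,v)-d_i(u,v)<\eta$ throughout $S\times S$; and (iii) the McShane-style smoothing $h'(z):=\min_{v\in S}\bigl(h(v)+d_i(z,v)\bigr)$, clamped into $[0,1]$, is $d_i$-nonexpansive on all of $X$ and satisfies $0\le h(u)-h'(u)\le\eta$ for $u\in S$ (the upper bound from $h(v)\ge h(u)-d^\ast(u,v)\ge h(u)-d_i(u,v)-\eta$). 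Then $\tau(x)\models h'-\tau(y)\models h'$ differs from $\tau(x)\models h-\tau(y)\models h$ by at most $2\eta$, so $\Gamma(d_i)(x,y)\ge\Gamma(d^\ast)(x,y)-\delta-2\eta$ once $h$ was chosen $\delta$-optimal; letting $\eta,\delta\to 0$ gives $\sup_i\Gamma(d_i)(x,y)\ge\Gamma(d^\ast)(x,y)$, as required.

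The step I expect to be most delicate is (iii): checking that the smoothing is globally $d_i$-nonexpansive and that clamping does not disturb its values on $S$, together with the bookkeeping that the two $\models$-integrals change by at most $2\eta$. Everything else is routine lattice theory. As an alternative to this hands-on argument, one can bypass it entirely and simply invoke the Kleene fixpoint theorem together with the $\omega$-continuity of $\Gamma$ established in, e.g.,~\cite{DBLP:conf/concur/KonigM18}.
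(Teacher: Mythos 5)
Your proof is correct and follows exactly the route the paper itself indicates: the paper does not spell out a proof of \cref{prop:kleene}, instead citing the Kleene fixpoint theorem and the $\omega$-continuity of $\Gamma$, and your argument supplies precisely that $\omega$-continuity step, using finite branching together with a McShane-style extension --- the same device the paper employs in the proof of \cref{lem:supsupp}. The details (clamping, the $\eta$-bookkeeping on the finite support set, and the reduction to a pre-fixed-point argument) all check out.
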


\section{Proof System}\label{sec:proofsys}

In this section, we define our derivation system for lower bounds on behavioural distances between states of an LMC.
The conclusion of the rules are of the form \(x \apart{\varepsilon} y\) which, as our soundness result will show, implies that \(\bd(x,y) \geq \varepsilon\), i.e., the behavioural distance between \(x\) and \(y\) is at least \(\varepsilon\).
The definition of \(\bd\) suggests two rules, one for each case. The label case straightforwardly yields the rule
\begin{equation*}
	\AxiomC{\(l(x) \neq l(y)\)}
	\RightLabel{(lab)}
	\UnaryInfC{\(x \apart{1} y\)}
	\DisplayProof
\end{equation*}
In the supremum case, \(\bd(x,y)\) can be bounded from below by \(\tau(x) \cdot h - \tau(y) \cdot h\) for any non-expansive map \(h\) by definition.
However, it is not immediately clear that this can be done inductively, as we can not assume \(\bd\) to be known, and thus cannot use it to choose a non-expansive \(h\).
Fortunately, as long as the system is sound with respect to the behavioural distance, it suffices to have a map \(h\) for which a kind of \emph{pairwise non-expansiveness} holds: for any \(x',y'\) we have \(|h(x') - h(y')| \leq \varepsilon\) for some \(\varepsilon\) such that we have proved \(x' \apart{\varepsilon} y'\). Soundness then implies that \(|h(x') - h(y')| \leq \varepsilon \leq \bd(x',y')\) for all \(x',y'\), which is exactly non-expansiveness of \(h\) with respect to \(\bd\).

Now, in proofs, we could allow arbitrary recursive proofs and require the choice of a pairwise non-expansive map to correctly apply the rule. Alternatively, we can choose to allow arbitrary maps.
We are then required to prove that for all \(x',y' \in X\), \(|h(x') - h(y')|\) is a lower bound on the behavioural distance.
We can see the corresponding proof obligations \(x' \apart{|h(x') - h(y')|} y'\) as those generated by a chosen map \(h\). The first option fits with a forward reasoning approach to constructing a proof; we prove some bounds and try to find a fitting \(h\). The second is a backward approach; if we wish to show a bound \(x \apart{\varepsilon} y\), we must supply an \(h\) and recursively prove its validity.

We choose the latter approach, primarily because it makes the proof obligations clearer, and we will be able to see when a choice of map is not valid. Using the earlier form, a chosen \(h\) may be invalid because we have not proved strong enough bounds, or because it is simply not non-expansive with respect to \(\bd\).
Such a rule can be written as follows:
\begin{equation*}
	\AxiomC{\(h \colon X \to [0,1]\)}
	\AxiomC{\(\forall x', y' \in X \ldotp x' \apart{|h(x') - h(y')|} y'\)}
	\AxiomC{\(\tau(x) \cdot h - \tau(y) \cdot h \geq \varepsilon\)}
	\TrinaryInfC{\(x \apart{\varepsilon} y\)}
	\DisplayProof
\end{equation*}
However, in this form, the rule does not give a usable derivation system.
Namely, in case an LMC has an infinite state space, there will be infinitely many recursive proof obligations.
We will ensure that this is always finite, and even reduce the work required to construct proofs beyond this.
Further, the map \(h\) is so far \emph{real}-valued. This poses a problem both for our approximate completeness result, in which we need to be able to compute these maps, and the construction from proofs to modal formulas whose interpretation will be rational-valued.
Summarising, to remedy these issues, we adapt the above rule in the following ways:
\begin{itemize}
	\item we show that \(h\) need only be defined on a finite subset of the state space thereby generating only finitely many proof obligations;
	\item we restrict the codomain of \(h\) to rationals;
	\item we reduce the number of recursive proof obligations further by not requiring proofs for those bounds which follow from reflexivity and symmetry.
\end{itemize}

To make our proof system and its presentation more pleasant, we include three more rules inspired by those from quantitative equational theories~\cite{DBLP:conf/lics/MardarePP16}. Namely: a \emph{zero} (reflexivity) rule; a \emph{symmetry} rule; and a \emph{weakening} rule.
Together, this brings us to the following rules:
\begin{definition}\label{def:proofrules}
	Let \((X,L,\tau,l)\) be an LMC, \(x,y \in X\), and \(\varepsilon \in \UIQ\). Further, we define \(\supps_{x,y} := \{ s \in X \mid \tau(x)(s) \neq \tau(y)(s) \}\) and \(\mu \cdot_\supps h := \sum_{s \in \supps} \mu(s) \cdot h(s)\). We may drop the subscripts \(x,y\) and \(\supps\) whenever clear from the context.

	Then, we define the following derivation rules:
	\begin{equation*}
		\AxiomC{\vphantom{\(y \apart{\varepsilon} x\)}}
		\RightLabel{(zero)}
		\UnaryInfC{\(x \apart{0} y\)}
		\DisplayProof
		\qquad
		\AxiomC{\(y \apart{\varepsilon} x\)}
		\RightLabel{(symm)}
		\UnaryInfC{\(x \apart{\varepsilon} y\)}
		\DisplayProof
		\qquad
		\AxiomC{\(x \apart{\varepsilon'} y\)}
		\AxiomC{\(\varepsilon \leq \varepsilon'\)}
		\RightLabel{(weak)}
		\BinaryInfC{\(x \apart{\varepsilon} y\)}
		\DisplayProof
		\qquad
		\AxiomC{\(l(x) \neq l(y)\)}
		\RightLabel{(lab)}
		\UnaryInfC{\(x \apart{1} y\)}
		\DisplayProof
	\end{equation*}
	\begin{equation*}
		\AxiomC{\(h \colon \supps \to \UIQ\)}
		\noLine
		\UnaryInfC{\(\forall x', y' \in \supps \ldotp h(x') > h(y') \implies x' \apart{h(x') - h(y')} y'\)}
		\AxiomC{\(\tau(x) \cdot_\supps h - \tau(y) \cdot_\supps h \geq \varepsilon\)}
		\RightLabel{(exp)}
		\BinaryInfC{\(x \apart{\varepsilon} y\)}
		\DisplayProof
	\end{equation*}
	We write \(\PTs{X}\) for the smallest set which contains all instances of the \emph{(zero)} and \emph{(lab)} rules for \(x,y \in X\), and is closed under applications of all instances of \emph{(symm)}, \emph{(weak)}, and \emph{(exp)} for any \(x,y \in X\) and \(\varepsilon \in \UIQ\).
\end{definition}
Note that we write conditions as premises in the rules rather than as separate side-conditions for convenience. As suggested by the definition of \(\PTs{X}\), this makes both the \emph{(zero)} and \emph{(lab)} rules axioms.
Further, strictly speaking, \emph{(exp)} is a family of rules, indexed by the maps \(h\).
The set \(\PTs{X}\) can be thought of as the set of all proof trees which can be built from the given rules.
As is usual, we will write \(\vdash x \apart{\varepsilon} y\) to mean that the given judgment is provable, i.e., there is a proof tree in \(\PTs{X}\) with the given judgment at the root.
\begin{remark}
	Note that the restriction to \(\supps\) in \emph{(exp)} means proofs, which are finite depth by definition, will also be finite breadth even when the LMC under consideration has an infinite state space.
	This is because \(\{ s \in X \mid \tau(x)(s) \neq \tau(y)(s) \}\) is a subset of \(\supp(\tau(x)) \cup \supp(\tau(y))\), which is finite by assumption.
	We will see later that this allows us to provide witnesses as both finite proof trees and finite modal formulas.
	This improves on the earlier work of~\cite{DBLP:conf/fossacs/RadyB23} which restricts to finite state spaces.
	We illustrate this improvement in \cref{ex:randwalk}, once we have shown soundness and completeness of the proof system, and its correspondence with modal formulas.
\end{remark}

\begin{example}\label{ex:simpleproof}
	We continue with the LMC from \cref{ex:LMC} and show how we can prove the distance between \(x\) and \(y\) shown there as a lower bound. Using the \emph{(lab)} rule, the bounds \(u \apart{1} v\) can be proved for \(u,v \in \{x_1,y_1\}\) and \(v \in \{x_2,y_2\}\).
	This allows us to define \(h_0 \colon \supps \to \UIQ\) as before by \(h_0(z) = \textbf{if } z \in \{x_1,y_1\} \textbf{ then } 1 \textbf{ else } 0\) for which \(\tau(x) \cdot h_0 - \tau(y) \cdot h_0 = \frac{1}{10}\) so that we can prove
	\begin{prooftree}%
		\AxiomC{}
		\UnaryInfC{\(x_1 \apart{1} x_2\)}
		\AxiomC{}
		\UnaryInfC{\(x_1 \apart{1} y_2\)}
		\AxiomC{}
		\UnaryInfC{\(y_1 \apart{1} x_2\)}
		\AxiomC{}
		\UnaryInfC{\(y_1 \apart{1} y_2\)}
		\QuaternaryInfC{\(x \apart{\frac{1}{10}} y\)}
	\end{prooftree}
\end{example}

\begin{example}\label{ex:noform}
	Our second example serves to illustrate a limitation of our proof system, namely that the behavioural distance of states will not always be exactly provable in our system. It would only be provable if we allowed infinite depth proof trees.
	The LMC we consider is the same as the one in~\cite[Thm.~17]{DBLP:conf/fossacs/RadyB23}, which shows that there is an LMC containing states for which it is not possible to give a single formula ``explaining'' their distance.
	\begin{center}
		\begin{tikzpicture}[shorten >=0pt,node distance=2cm,
				on grid,auto,initial text=,every edge/.append style={bend angle=15}]
			\node[state] (x) {\slab{x}{a}} ;
			\node[state] (x1) [left=of x] {\slab{x_1}{b}} ;
			\node[state] (y) [right=5cm of x] {\slab{y}{a}} ;

			\path[->]
			(x) edge node[swap] {\(\frac{1}{2}\)} (x1)
			(x) edge[loop right] node[swap] {\(\frac{1}{2}\)} (x)
			;
		\end{tikzpicture}
	\end{center}
	As is discussed in \emph{op.\ cit.}, the distance \(\bd(x,y) = 1\) is reached only in the limit, not by any \(\Gamma^i(\bot)\) and thus not by any single tree.
	Proving the bound given by \(\Gamma^i(\bot)\) (for \(i > 0\)) can be done using \(i - 1\) applications of the \emph{(exp)} rule together with two applications each of the \emph{(lab)} and \emph{(zero)} rules.
	For example, once we have proved \(x_1 \apart{1} u\) for \(u \in \{x,y\}\), we can take the map \(h_0 \colon x,y \mapsto 0, x_1 \mapsto 1\) for which \(\tau(x) \cdot h_0 - \tau(y) \cdot h_0 = \frac{1}{2}\) and prove:
	\begin{prooftree}
		\AxiomC{}
		\UnaryInfC{\(x_1 \apart{1} x\)}
		\AxiomC{}
		\UnaryInfC{\(x_1 \apart{1} y\)}
		\BinaryInfC{\(x \apart{\frac{1}{2}} y\)}
	\end{prooftree}
	This step (plus an application of \emph{(symm)}) allows the next application of \emph{(exp)} with a non-expansive \(h_0\) mapping \(x\) to \(\frac{1}{2}\), yielding a bound of \(\frac{3}{4}\). Continuing to increase the value of \(h_0(x)\) in this way, we approach \(\bd(x,y)\) from below.
\end{example}

\subsection{Soundness}\label{sec:sound}
We now move on to showing soundness of the system, i.e., that if we can prove \(x \apart{\varepsilon} y\), then \(\bd(x,y) \geq \varepsilon\).
The \emph{(zero)} rule is sound as our pseudometrics are valued in \([0,1]\) and thus 0 is always a sound lower bound.
Similarly, the behavioural distance is symmetric, so that the order of states does not change a lower bound's validity.
Our weakening rule is sound by transitivity of \(\leq\).
Soundness of the label rule follows from the definition of \(\Gamma\). This is similar for the expectation rule, but this requires some more work. The discussion at the beginning of this section gives some intuition.

Due to our restriction of the domain of the map in the \emph{(exp)} rule, we will require the following lemma in the soundness proof:
\begin{restatable}{lemma}{supsupp}\label{lem:supsupp}
	For \((X,L,\tau,l)\) an LMC, \(d \colon X \times X \to [0,1]\) a pseudometric and \(x,y \in X\):
	\[
		\sup_{h \colon (X,d) \to ([0,1],d_e)} \tau(x) \cdot h - \tau(y) \cdot h = \sup_{h \colon (\supps,d|_\supps) \to ([0,1],d_e)} \tau(x) \cdot_\supps h - \tau(y) \cdot_\supps h
	\]
	where \(d|_\supps = d \circ (\iota_\supps \times \iota_\supps)\) with \(\iota_\supps \colon \supps \hookrightarrow X\) the inclusion map.
\end{restatable}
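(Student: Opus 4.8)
The plan is to establish the two inequalities between the suprema separately; the left-hand supremum (over non-expansive $h$ on all of $(X,d)$) lies below the right-hand one by \emph{restriction}, and above it by \emph{extension}.

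For the first inequality, given a non-expansive $h \colon (X,d) \to ([0,1],d_e)$, the composite $h \circ \iota_\supps$ is non-expansive as a map $(\supps, d|_\supps) \to ([0,1],d_e)$, since the inclusion $\iota_\supps$ is non-expansive by the very definition of $d|_\supps$. Moreover $\supp(\tau(x)) \cup \supp(\tau(y)) \subseteq \supps$, so the summands with zero probability drop out and $\tau(x)\models h = \tau(x)\models_\supps (h\circ\iota_\supps)$, and likewise for $y$. Hence $\tau(x)\models h - \tau(y)\models h$ is one of the values over which the right-hand supremum ranges, which gives the inequality.

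For the converse I need every non-expansive $[0,1]$-valued $g$ on $(\supps, d|_\supps)$ to be induced, on $\supps$, by a non-expansive $[0,1]$-valued map on all of $(X,d)$. I would use the McShane--Whitney extension $\tilde g \colon X \to \mathbb R$, $\tilde g(z) := \inf_{s\in\supps}\bigl(g(s) + d(z,\iota_\supps(s))\bigr)$, which is well defined because $\supps \neq \emptyset$ (supports of distributions are nonempty) and because $g \geq 0$ bounds the infimum from below. Using only symmetry and the triangle inequality of $d$ one checks that $\tilde g$ is non-expansive and that $\tilde g \circ \iota_\supps = g$ (the ``$\geq$'' half of the latter using non-expansiveness of $g$). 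Since $\tilde g$ may exceed $1$, I post-compose with the truncation $t \mapsto \max(0,\min(1,t))$, which is non-expansive $\mathbb R \to [0,1]$ and leaves the values on $\supps$ unchanged because $g$ already takes values in $[0,1]$; call the result $\hat g \colon (X,d) \to ([0,1],d_e)$. The same ``zero-probability summands vanish'' computation yields $\tau(x)\models \hat g = \tau(x)\models_\supps g$ and similarly for $y$, so $\tau(x)\models_\supps g - \tau(y)\models_\supps g$ is bounded above by the left-hand supremum; taking the supremum over $g$ completes the argument.

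The only step beyond bookkeeping is the extension, which is the classical McShane--Whitney theorem; its proof carries over verbatim to pseudometric spaces since it never uses that distinct points are at positive distance, and the codomain constraint is handled by the (non-expansive) clamp to $[0,1]$. I expect the actual write-up to spend essentially all its length on the elementary ``supports absorb the sum'' identities and on the Lipschitz estimate for $\tilde g$, rather than on anything genuinely difficult.
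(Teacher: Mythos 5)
Your proof is correct and follows essentially the same route as the paper's: restriction along \(\iota_\supps\) for one inequality, and the McShane--Whitney-style extension \(\inf_{s\in\supps}(g(s)+d(\cdot,s))\) for the other, with your post-hoc clamp to \([0,1]\) being only a cosmetic variant of the paper's use of truncated addition inside the infimum.
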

\begin{proofappendix}{supsupp}
	We prove two inequalities:
	\subparagraph*{\(\leq\):}
	This holds because any \(h \colon (X,d) \to ([0,1],d_e)\) restricts to a map \(h|_\supps = h \circ \iota_\supps \colon (\supps,d|_\supps) \to ([0,1],d_e)\), and we can show that
	\[
		\tau(x) \cdot h - \tau(y) \cdot h = \tau(x) \cdot_\supps h|_\supps - \tau(y) \cdot_\supps h|_\supps
	\]
	\subparagraph*{\(\geq\):}
	For this direction, we show that for any \(h \colon (\supps, d|_\supps) \to ([0,1],d_e)\), there is an \(h' \colon (X,d) \to ([0,1],d_e)\) such that
	\[
		\tau(x) \cdot h' - \tau(y) \cdot h' = \tau(x) \cdot_\supps h - \tau(y) \cdot_\supps h
	\]
	We use an existing construction of extensions of non-expansive maps, to extend \(h\) along the inclusion \(\iota_\supps \colon \supps \hookrightarrow X\). Namely, we define \(h'(x) := \inf_{z \in \supps} h(z) \oplus d(x,z)\), where \(\oplus\) is truncated addition on the unit interval.
	This is an extension in the sense that \(h' \circ \iota_\supps = h\), so that the above equality indeed holds.
\end{proofappendix}

\begin{remark}\label{rem:hfromLP}
	We can further restrict the space of possible maps \(h\) in the above, by seeing the supremum as the optimal solution of a linear program as follows.
	We encode functions \(h \colon \supps \to [0,1]\) as (finite) vectors \(\vec{h} \in [0,1]^{|\supps|}\), writing \(\vec{h}_x\) for the value of \(\vec{h}\) at the position indexed by \(x \in \supps\).
	Then, each inequality \(|h(x) - h(y)| \leq d(x,y)\) can be expressed by \(\vec{a} \cdot \vec{h} \leq d(x,y)\) and \(\vec{a}' \cdot \vec{h} \leq d(x,y)\) with \(\vec{a}_x = 1, \vec{a}_y = -1, \vec{a}'_x = -1, \vec{a}'_y = 1\) (and all other entries zero). We can enforce \(0 \leq \vec{h}_x \leq 1\) similarly.
	We are thus interested in the problem of maximising \(\tau(x) \cdot h - \tau(y) \cdot h\) (a linear expression) subject to the constraints expressed by \(\mathbf{A} \cdot \vec{h} \leq \vec{b}\) for an integer matrix \(\mathbf{A}\) and vector \(\vec{b}\).

	The feasible region of this problem is a polytope, which we call \(\Hpoly\). It is convex and closed due to the shape of the constraints, and bounded due to the restriction to the unit interval. It is furthermore non-empty as any constant map valued in \([0,1]\) lies within it.
	It is known (see, e.g.,~\cite{trustrum1971linear}) that for such a feasible region, the optimal value is achieved in the vertices of the polytope, which we denote by \(V(\Hpoly)\). Then we can write
	\begin{equation*}
		\sup_{h \colon (X,d) \to ([0,1],d_e)} \tau(x) \cdot h - \tau(y) \cdot h = \max_{h \in V(\Hpoly)} \tau(x) \cdot_\supps h - \tau(y) \cdot_\supps h
	\end{equation*}
	The supremum becomes a maximum because finitely many linear inequalities define a polytope with finitely many vertices.

	Obtaining a map \(h\) in which the supremum is achieved can thus be done using an algorithm such as simplex. We will apply this recursively for \(d = \Gamma^i(\bot)\) (the finite approximants of \(\bd\)), to construct proof trees in our proof of approximate completeness in the next section.
\end{remark}

We are now able to show (by structural induction) that any proof in our system yields a lower bound on \(\bd\).
\begin{restatable}[Soundness]{theorem}{sound}\label{lem:sound}
	For any LMC \((X,L,\tau,l)\), any proof tree built from the rules of \cref{def:proofrules}, any \(\varepsilon \in \UIQ\), and any \(x, y \in X\), if the proof tree has \(x \apart{\varepsilon} y\) at the root, then \(\bd(x,y) \geq \varepsilon\).
\end{restatable}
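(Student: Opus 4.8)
The plan is to prove the statement by structural induction on the proof tree, handling one rule per case. Four of the five rules are essentially immediate. For \emph{(zero)} there are no premises and $\bd(x,y) \geq 0$ holds because $\bd$ is valued in $[0,1]$. For \emph{(symm)} the induction hypothesis gives $\bd(y,x) \geq \varepsilon$, and $\bd(x,y) = \bd(y,x)$ by symmetry of pseudometrics. For \emph{(weak)} the induction hypothesis gives $\bd(x,y) \geq \varepsilon'$, and we are done since $\varepsilon \leq \varepsilon'$. For \emph{(lab)} the side condition $l(x) \neq l(y)$ together with $\bd = \Gamma(\bd)$ gives $\bd(x,y) = \Gamma(\bd)(x,y) = 1$.

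The substantial case is \emph{(exp)}. Suppose $x \apart{\varepsilon} y$ is derived from a map $h \colon \supps \to \UIQ$, with $\supps = \supp(\tau(x)) \cup \supp(\tau(y))$, satisfying $\tau(x) \models h - \tau(y) \models h \geq \varepsilon$, together with subproofs of $x' \apart{h(x') - h(y')} y'$ for all $x', y' \in \supps$ with $h(x') > h(y')$. If $l(x) \neq l(y)$ then $\bd(x,y) = 1 \geq \varepsilon$ as before, so assume $l(x) = l(y)$, so that $\bd(x,y) = \Gamma(\bd)(x,y) = \sup_{g \colon (X,\bd) \to ([0,1],d_e)} \tau(x) \models g - \tau(y) \models g$. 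The key step is to upgrade the one-sided pairwise hypotheses on $h$ into non-expansiveness of $h$ as a map $(\supps, \bd|_\supps) \to ([0,1], d_e)$: for $x', y' \in \supps$ with $h(x') > h(y')$, the induction hypothesis on the corresponding subproof yields $\bd(x',y') \geq h(x') - h(y') = |h(x') - h(y')|$; the case $h(x') < h(y')$ follows from this by symmetry of $\bd$; and the case $h(x') = h(y')$ is trivial. Applying \cref{lem:supsupp} with $d = \bd$ rewrites $\Gamma(\bd)(x,y)$ as a supremum over non-expansive maps on $(\supps, \bd|_\supps)$, and since $\supp(\tau(x)), \supp(\tau(y)) \subseteq \supps$ we have $\tau(x) \models h = \tau(x) \models_\supps h$ and likewise for $y$. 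Instantiating that supremum at $g = h$ gives $\bd(x,y) \geq \tau(x) \models h - \tau(y) \models h \geq \varepsilon$.

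The main obstacle is this \emph{(exp)} case, and within it the bookkeeping around the map $h$: recovering full non-expansiveness from the asymmetric proof obligations $h(x') > h(y') \implies x' \apart{h(x')-h(y')} y'$ (using the induction hypothesis together with symmetry of $\bd$), and then lining up the support-restricted supremum provided by \cref{lem:supsupp} with the expectations $\tau(x) \models h$ and $\tau(y) \models h$ that occur in the rule. The remaining cases are routine.
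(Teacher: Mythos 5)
Your proof is correct and follows essentially the same route as the paper's: structural induction on the proof tree, with the \emph{(zero)}, \emph{(symm)}, \emph{(weak)}, and \emph{(lab)} cases handled exactly as in the paper, and the \emph{(exp)} case resolved by upgrading the one-sided proof obligations to full non-expansiveness of \(h\) on \((\supps,\bd|_\supps)\) via the induction hypothesis and symmetry of \(\bd\), then invoking \cref{lem:supsupp} at \(d = \bd\) and instantiating the support-restricted supremum at \(h\). No gaps; the explicit remark that \(\tau(x) \models h = \tau(x) \models_\supps h\) is a detail the paper leaves implicit.
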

\begin{proofappendix}{sound}
	We proceed by induction on the structure of the proof tree.

	\textbf{Case \emph{(zero)}:} By its definition, \(\bd\) takes values in \([0,1]\), so that \(\bd(x,y) \geq 0\) holds.

	\textbf{Case \emph{(label)}:} We have a proof tree
	\[
		\AxiomC{\(l(x) \neq l(y)\)}
		\UnaryInfC{\(x \apart{1} y\)}
		\DisplayProof
	\]
	By definition of \(\Gamma\), we must have \(\bd(x,y) = 1\), so that indeed \(\bd(x,y) \geq 1\).

	\textbf{Case \emph{(symm)}:} We have a proof tree
	\[
		\AxiomC{\(y \apart{\varepsilon} x\)}
		\UnaryInfC{\(x \apart{\varepsilon} y\)}
		\DisplayProof
	\]
	By induction, we have \(\bd(y,x) \geq \varepsilon\), but \(\bd\) is symmetric, so that also \(\bd(x,y) \geq \varepsilon\).

	\textbf{Case \emph{(weak)}:} We have a proof tree
	\[
		\AxiomC{\(x \apart{\varepsilon'} y\)}
		\AxiomC{\(\varepsilon \leq \varepsilon'\)}
		\BinaryInfC{\(x \apart{\varepsilon} y\)}
		\DisplayProof
	\]
	By induction, we have \(\bd(x,y) \geq \varepsilon' \geq \varepsilon\).

	\textbf{Case \emph{(exp)}:} We have a proof tree
	\begin{equation*}
		\expr
	\end{equation*}
	By induction, we have for all \(x',y' \in \supps\) with \(h(x') > h(y')\) that \(\bd(x',y') \geq |h(x') - h(y')|\). For \(x',y' \in \supps\) with \(h(x') < h(y')\), we have \(\bd(x',y') = \bd(y',x') \geq |h(y') - h(x')| = |h(x') - h(y')|\). For the remaining pairs, \(|h(x') - h(y')| = 0 \leq \bd(x',y')\). In other words, \(h \colon \supps \to \UIQ\) is a non-expansive map \(h \colon (\supps,\bd) \to (\UIQ,d_e)\).
	As \(\bd\) is defined as the least fixed point of \(\Gamma\), we have
	\begin{align*}
		\bd(x,y) & = \Gamma(\bd)(x,y)                                                                                                              \\
		         & = \begin{cases}
			             1,                                                                             & \text{ if } l(x) \neq l(y), \\
			             \sup_{h \colon (X,\bd) \to ([0,1],d_e)} \tau(x) \cdot h - \tau(y) \cdot h, & \text{ o.w.}
		             \end{cases}
	\end{align*}
	In case \(l(x) \neq l(y)\), we have \(\bd(x,y) = 1 \geq \varepsilon\).

	In the remaining case, we have
	\begin{align*}
		\bd(x,y) & = \sup_{k \colon (X,\bd) \to ([0,1],d_e)} \tau(x) \cdot k - \tau(y) \cdot k                          \\
		         & = \sup_{h \colon (\supps,d|_\supps) \to ([0,1],d_e)} \tau(x) \cdot_\supps h - \tau(y) \cdot_\supps h \\
		         & \geq \tau(x) \cdot h - \tau(y) \cdot h \geq \varepsilon
	\end{align*}
	where the second equality is shown in \cref{lem:supsupp} and the first inequality holds because \(h\) is one of the non-expansive maps ranged over in the \(\sup\).
	This covers all cases, so soundness follows by induction.
\end{proofappendix}

\subsection{Approximate Completeness}\label{sec:appcomp}
The rest of this section is dedicated to proving the approximate completeness of the system.
This will show that we can prove lower bounds arbitrarily close to the ``true'' value given by \(\bd\).
Our proof relies on the fact that we can get arbitrarily close to \(\bd\) with its finite approximants \(\Gamma^i(\bot)\), and the following lemma, that shows how we can construct proofs exhibiting these finite approximants as lower bounds on the behavioural distance.

\begin{lemma}\label{lem:nstep}
	For any \(i \in \mathbb{N}\) and \(x,y \in X\), we have \(\vdash x \apart{\Gamma^i(\bot)(x,y)} y\).
\end{lemma}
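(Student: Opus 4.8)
The plan is to proceed by induction on $i$. The base case $i = 0$ is immediate: $\Gamma^0(\bot)(x,y) = \bot(x,y) = 0$, and $\vdash x \apart{0} y$ by the \emph{(zero)} rule. For the inductive step, assume $\vdash x' \apart{\Gamma^i(\bot)(x',y')} y'$ for all $x',y' \in X$; we must show $\vdash x \apart{\Gamma^{i+1}(\bot)(x,y)} y$ for arbitrary $x,y$. Split on the definition of $\Gamma$. If $l(x) \neq l(y)$, then $\Gamma^{i+1}(\bot)(x,y) = 1$ and the \emph{(lab)} rule applies directly. Otherwise $\Gamma^{i+1}(\bot)(x,y) = \sup_{h \colon (X,\Gamma^i(\bot)) \to ([0,1],d_e)} \tau(x) \models h - \tau(y) \models h$, and by \cref{lem:supsupp} this equals $\sup_{h \colon (\supps, \Gamma^i(\bot)|_\supps) \to ([0,1],d_e)} \tau(x) \models_\supps h - \tau(y) \models_\supps h$, where $\supps = \supps_{x,y}$.

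The crux is then to show that this last supremum is \emph{attained} by some non-expansive map $h_0 \colon (\supps, \Gamma^i(\bot)|_\supps) \to (\UIQ, d_e)$ with rational values. This is where linear programming enters: fixing the finite set $\supps$ and the distances $\Gamma^i(\bot)|_\supps$, the quantity $\tau(x) \models_\supps h - \tau(y) \models_\supps h = \sum_{z \in \supps}(\tau(x)(z) - \tau(y)(z)) \cdot h(z)$ is a linear functional of the vector $(h(z))_{z \in \supps}$, and the non-expansiveness constraints $|h(z) - h(z')| \le \Gamma^i(\bot)(z,z')$ together with $0 \le h(z) \le 1$ are finitely many linear inequalities. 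Since the coefficients $\tau(x)(z) - \tau(y)(z)$ and the constraint bounds $\Gamma^i(\bot)(z,z')$ are rational (the former by definition of $\D_\mathbb{Q}$, the latter because — one checks by a side induction — each $\Gamma^i(\bot)$ is rational-valued on the finite reachable data, being itself obtained as an optimum of a rational LP), this is a rational LP over a bounded nonempty polytope; hence its optimum is attained at a vertex, which has rational coordinates. This yields $h_0 \colon \supps \to \UIQ$ achieving the supremum, i.e.\ $\tau(x) \models_\supps h_0 - \tau(y) \models_\supps h_0 = \Gamma^{i+1}(\bot)(x,y) \ge \Gamma^{i+1}(\bot)(x,y)$, and $h_0$ is non-expansive w.r.t.\ $\Gamma^i(\bot)|_\supps$.

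Finally, I apply the \emph{(exp)} rule with this $h_0$: the side condition $\tau(x) \models h_0 - \tau(y) \models h_0 \ge \Gamma^{i+1}(\bot)(x,y)$ holds by construction, and the recursive premises $x' \apart{h_0(x') - h_0(y')} y'$ for $x',y' \in \supps$ with $h_0(x') > h_0(y')$ are discharged as follows: non-expansiveness gives $h_0(x') - h_0(y') \le \Gamma^i(\bot)(x',y')$, the induction hypothesis gives $\vdash x' \apart{\Gamma^i(\bot)(x',y')} y'$, and one application of \emph{(weak)} brings this down to $\vdash x' \apart{h_0(x') - h_0(y')} y'$. This produces a proof tree with root $x \apart{\Gamma^{i+1}(\bot)(x,y)} y$, completing the induction.

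The main obstacle is the rational-attainment argument: one must be careful to establish, as a preliminary lemma or side induction, that the $\Gamma^i(\bot)$ values occurring as constraint bounds are themselves rational (so that the LP really is rational), and to invoke the correct LP fact — that a linear program over a nonempty bounded rational polyhedron attains its optimum at a rational vertex. Everything else is routine rule-application and bookkeeping with \emph{(weak)} and \emph{(symm)}.
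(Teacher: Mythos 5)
Your proof is correct and follows essentially the same route as the paper's: induction on \(i\), the \emph{(zero)}/\emph{(lab)} cases, reduction to the supports via \cref{lem:supsupp}, a rational linear program attaining the supremum at a rational optimal \(h_0\), and a final application of \emph{(exp)}. You are in fact slightly more explicit than the paper about two points it passes over quickly — the side induction establishing that the \(\Gamma^i(\bot)\) values are rational, and the \emph{(weak)} step needed to bring the induction hypothesis \(x' \apart{\Gamma^i(\bot)(x',y')} y'\) down to the premise \(x' \apart{h_0(x') - h_0(y')} y'\) — both of which are handled correctly.
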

\begin{proof}
	By induction on \(i\), where we strengthen the induction by showing that \(\Gamma^i(\bot)(x,y)\) is always rational.
	For the base case, we have \(\Gamma^0(\bot)(x,y) = \bot(x,y) = 0\) which is rational, and we can prove \(x \apart{0} y\) using the \emph{(zero)} rule.

	Now let \(i \in \mathbb{N}\), and suppose for any \(x,y \in X\), that we can prove \(x \apart{\Gamma^i(\bot)(x,y)} y\) and that \(\Gamma^i(\bot)(x,y)\) is rational. We have
	\begin{align*}
		\Gamma^{i+1}(\bot)(x,y) & = \Gamma(\Gamma^i(\bot))(x,y)                                                                                                     \\
		                        & = \begin{cases}
			                            1,                                                                                        & \text{ if } l(x) \neq l(y), \\
			                            \sup_{h \colon (X,\Gamma^i(\bot)) \to ([0,1],d_e)} \tau(x) \cdot h - \tau(y) \cdot h, & \text{ o.w.}
		                            \end{cases}
	\end{align*}
	If \(l(x) \neq l(y)\), we can prove \(x \apart{1} y\) using the \emph{(lab)} rule.
	Otherwise, by \cref{lem:supsupp}, we have
	\begin{align*}
		\sup_{h \colon (X,\Gamma^i(\bot)) \to ([0,1],d_e)} \tau(x) \cdot h - \tau(y) \cdot h = \sup_{h \colon (\supps,\Gamma^i(\bot)|_\supps) \to ([0,1],d_e)} \tau(x) \cdot_\supps h - \tau(y) \cdot_\supps h
	\end{align*}
	As explained in \cref{rem:hfromLP}, we can find an optimal \(h_0\) by solving a linear program.
	This will in particular be a map \(h_0 \colon (\supps,\Gamma^i(\bot)|_\supps) \to (\UIQ,d_e)\) (note the restriction to rationals) because all the coefficients in the problem are rational, by induction. We thus have
	\[
		\sup_{h \colon (\supps,\Gamma^i(\bot)|_\supps) \to ([0,1],d_e)} \tau(x) \cdot_\supps h - \tau(y) \cdot_\supps h = \tau(x) \cdot_\supps h_0 - \tau(y) \cdot_\supps h_0
	\]
	which is rational, because the transition probabilities given by \(\tau(x)\) and \(\tau(y)\) are also rational.
	We can now construct the following proof, in which recursive proofs are given by induction, and the above discussion allows us to choose \(\varepsilon := \Gamma^{i+1}(\bot)(x,y)\):
	\begin{equation*}
		\AxiomC{\(h_0 \colon \supps \to \UIQ\)}
		\noLine
		\UnaryInfC{\(\forall x', y' \in \supps \ldotp h_0(x') > h_0(y') \implies x' \apart{h_0(x') - h_0(y')} y'\)}
		\AxiomC{\(\tau(x) \cdot h_0 - \tau(y) \cdot h_0 \geq \varepsilon\)}
		\BinaryInfC{\(x \apart{\varepsilon} y\)}
		\DisplayProof
	\end{equation*}\qedhere
\end{proof}
We see that completeness in this sense only requires the use of the \emph{(zero)}, \emph{(lab)}, and \emph{(exp)} rules. In fact, it can be shown for only the latter two, with the \emph{(zero)} rule being essentially an instance of the \emph{(exp)} rule where the map \(h\) is taken to be constant. The approximate completeness of the system is now a simple consequence of the above lemma and \cref{prop:kleene}, with no further applications of the rules needed.
\begin{restatable}[Approximate Completeness]{theorem}{approxcomp}\label{thm:approxcomp}
	For any (real) \(\delta > 0\), and any \(x,y \in X\) there is a proof tree with \(x \apart{\varepsilon} y\) at the root, so that \(0 \leq \bd(x,y) - \varepsilon < \delta\).
\end{restatable}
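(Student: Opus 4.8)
The plan is to combine \cref{prop:kleene} with \cref{lem:nstep}. By \cref{prop:kleene}, we know $\bd(x,y) = \sup_{i < \omega} \Gamma^i(\bot)(x,y)$, so the sequence $\bigl(\Gamma^i(\bot)(x,y)\bigr)_{i<\omega}$ is a nondecreasing sequence of reals converging (as a supremum) to $\bd(x,y)$. Hence for any real $\delta > 0$ there exists some index $i \in \mathbb{N}$ such that $\bd(x,y) - \Gamma^i(\bot)(x,y) < \delta$; since the sequence is below its supremum we also have $\bd(x,y) - \Gamma^i(\bot)(x,y) \geq 0$.

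Next I would invoke \cref{lem:nstep} with this particular $i$: it gives a proof tree with $x \apart{\varepsilon} y$ at the root, where $\varepsilon := \Gamma^i(\bot)(x,y)$. Note that $\varepsilon \in \UIQ$ because $\Gamma^i(\bot)$ is obtained from the rational distributions in $\tau$ by finitely many applications of $\Gamma$, each of which (in the relevant supremum case) is realised by a rational-valued optimal solution of a linear program with rational data, as spelled out in the proof of \cref{lem:nstep}; in the label case the value $1$ is already rational. This is exactly the $\varepsilon$ for which we have just established $0 \leq \bd(x,y) - \varepsilon < \delta$, so this proof tree witnesses the claim.

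Putting the two steps together: given $\delta > 0$, pick $i$ with $\bd(x,y) - \Gamma^i(\bot)(x,y) < \delta$, let $\varepsilon = \Gamma^i(\bot)(x,y)$, and take the proof tree supplied by \cref{lem:nstep}; then $x \apart{\varepsilon} y$ is at its root and $0 \leq \bd(x,y) - \varepsilon < \delta$, as required. There is essentially no obstacle here — all the work has already been done in \cref{lem:nstep} (where the linear-programming argument handles rationality and the existence of a single optimal map) and in \cref{prop:kleene} (where $\omega$-continuity of $\Gamma$ gives the countable supremum); the remaining content is just the elementary observation that a supremum of a chain is approximated from below by its members. The only minor point worth stating explicitly is why the chosen $\varepsilon$ is rational, which follows from the construction in \cref{lem:nstep} rather than needing a fresh argument.
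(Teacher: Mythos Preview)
Your proposal is correct and matches the paper's own proof essentially line for line: pick $i$ via \cref{prop:kleene} so that $0 \le \bd(x,y) - \Gamma^i(\bot)(x,y) < \delta$, then apply \cref{lem:nstep} to obtain the proof tree with $\varepsilon = \Gamma^i(\bot)(x,y)$. The extra remark about rationality of $\varepsilon$ is a nice clarification but is already implicit in the statement of \cref{lem:nstep} (the derivation system only admits rational bounds).
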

\begin{proofappendix}{approxcomp}
	Let \(\delta > 0\). By \cref{prop:kleene}, we have \(\bd(x,y) = \sup_{i < \omega} \Gamma^i(\bot)(x,y)\), so there exists \(i \in \mathbb{N}\) such that
	\[
		0 \leq \bd(x,y) - \Gamma^i(\bot)(x,y) < \delta
	\]
	\Cref{lem:nstep} exactly gives us a proof of \(x \apart{\Gamma^i(\bot)(x,y)} y\), and we are done.
\end{proofappendix}

\section{Logic}\label{sec:logic}
The previous sections have given us a way to inductively derive lower bounds on the behavioural distance between states of an LMC, and shown soundness and approximate completeness of the proof system with respect to the behavioural distance \(\bd\).
In this sense, the system gives finite evidence or \emph{witnesses} for behavioural distances.

Another approach to giving such evidence is to construct formulas in some logic which, in the terminology of~\cite{DBLP:conf/fossacs/RadyB23}, ``explain'' the difference. Ideally, given states \(x,y \in X\), this would be a formula \(\varphi\) such that \(|\interp{\varphi}(x) - \interp{\varphi}(y)| = \bd(x,y)\), i.e., the difference in interpretations of \(\varphi\) on the states is exactly equal to their behavioural distance. However, as is shown in \emph{op.\ cit.}, such a formula can not be given in general (cf.\thinspace \cref{ex:noform}).
Instead, a construction is given of formulas corresponding to finite approximations of \(\bd\).
In our notation, these are formulas \(\varphi\) such that \(|\interp{\varphi}(x) - \interp{\varphi}(y)| = \Gamma^i(\bot)(x,y)\) (for some \(i\in\mathbb N\)).
As discussed in \cref{sec:behdist}, \(\bd\) can be obtained as the countable limit of these approximations, so that the construction of~\cite{DBLP:conf/fossacs/RadyB23} gives formulas explaining the behavioural distance of states up to an arbitrarily small error.

In this section, we give analogous constructions between proofs and formulas in the same logic used to characterise the behavioural distance \(\bd\) in~\cite{DBLP:conf/fossacs/RadyB23}.
We start by recalling this logic and its interpretation on LMCs.
Its semantics is given in terms of a real-valued interpretation function (first suggested in~\cite{DBLP:journals/jcss/Kozen85}) and is a slight variation of the logic studied in relation to behavioural distances in~\cite{DBLP:journals/tcs/DesharnaisGJP04}.
We then move onto the related constructions, the first of which is a straightforward inductive construction of a proof that the distance \(|\interp{\varphi}(x) - \interp{\varphi}(y)|\) is a lower bound.
The second, again inductively, constructs a formula witnessing some proved lower bound. This is based on constructions in~\cite{DBLP:conf/fossacs/RadyB23} and relies on a non-trivial lemma in the case where the distance of states arises from the supremum case in the definition of \(\Gamma\).

\begin{definition}
	Define the syntax of the logic \(\Lnot\) by the following grammar:
	\begin{equation*}
		\varphi ::= a \mid \bigcirc \varphi \mid \lnot \varphi \mid \varphi \ominus q \mid \varphi \lor \varphi
	\end{equation*}
	where \(a \in L\) and \(q \in \UIQ\). Further, given an LMC \((X,L,\tau,l)\), the quantitative semantics of \(\Lnot\) is given by the interpretation function \(\interp{\cdot} \colon \Lnot \to X \to \UIQ\) defined recursively by the following equations:
	\begin{alignat*}{3}
		\interp{a} (x)                & = \begin{cases} 1, & \text{ if } l(x) = a, \\ 0, & \text{ o.w.} \end{cases} \qquad &  & \interp{\varphi \ominus q}(x)  &  & = \max(0, \interp{\varphi}(x) - q)              \\
		\interp{\bigcirc \varphi} (x) & = \tau(x) \cdot \interp{\varphi}                                                 &  & \interp{\varphi \lor \psi} (x) &  & = \max(\interp{\varphi} (x), \interp{\psi} (x)) \\
		\interp{\lnot \varphi} (x)    & = 1 - \interp{\varphi} (x)                                                         &  &                                &  &
	\end{alignat*}
\end{definition}

\begin{remark}
	From the connectives in the logic \(\Lnot\), it is possible to define also \(\land\) and \(\oplus\) as \(\varphi \land \psi := \lnot (\lnot \varphi \lor \lnot \psi)\) and \(\varphi \oplus q := \lnot (\lnot \varphi \ominus q)\), which then have the expected semantics.
	We also write \(\mathsf{false}\) for the formula \(a \ominus 1\) whose interpretation is everywhere zero.
\end{remark}

\begin{example}
	Consider the LMC from \cref{ex:noform}, and the formulas \(\varphi_i := \bigcirc^i b\) for \(i \in \mathbb{N}\). We can show that \(\interp{\varphi_i}(x_1) = 1\) for any \(i\), so that \(\interp{\varphi_i}(x) = \sum_{n = 1}^i \left(\frac{1}{2}\right)^{i}\), while \(\interp{\varphi_i}(y) = 0\).
	The formula \(\varphi_i\) captures the probability of reaching a state with label \(b\) after \(i\) steps.
\end{example}

The logic and its interpretation induce new distances between states, namely the difference in interpretations \(|\interp{\varphi}(x) - \interp{\varphi}(y)|\). Our first correspondence result shows that this distance can be shown to be a lower bound on \(\bd(x,y)\) by a proof in our system.
\begin{restatable}{theorem}{ftop}\label{thm:ftop}
	For any LMC \((X,L,\tau,l)\), formula \(\varphi \in \Lnot\), and \(x,y \in X\), there exists a proof tree with \(x \apart{\varepsilon} y\) as its root, where \(\varepsilon = |\interp{\varphi}(x) - \interp{\varphi}(y)|\).
\end{restatable}
\begin{proofappendix}{ftop}
	By induction on the structure of \(\varphi\), where we write \(\pt{\psi,x,y}\) for the proof tree constructed for a formula \(\psi\) and states \(x,y\).

	\textbf{Case \(\varphi = a\):} We have \(\varepsilon = 0\) or \(\varepsilon = 1\) with the following proofs:
	\begin{equation*}
		\zero \qquad \lab
	\end{equation*}

	\textbf{Case \(\varphi = \lnot \psi\):} We have \(|\interp{\varphi}(x) - \interp{\varphi}(y)| = |\interp{\psi}(x) - \interp{\psi}(y)|\) so simply let \(\pt{\varphi,x,y} = \pt{\psi,x,y}\).

	\textbf{Case \(\varphi = \psi \ominus q\):} In this case we will have \(|\interp{\psi}(x) - \interp{\psi}(y)| \leq |\interp{\varphi}(x) - \interp{\varphi}(y)|\) (truncation may give an inequality) so that we have
	\begin{equation*}
		\AxiomC{\(x \apart{|\interp{\varphi}(x) - \interp{\varphi}(y)|} y\)}
		\AxiomC{\(|\interp{\psi}(x) - \interp{\psi}(y)| \leq |\interp{\varphi}(x) - \interp{\varphi}(y)|\)}
		\RightLabel{(weak)}
		\BinaryInfC{\(x \apart{|\interp{\psi}(x) - \interp{\psi}(y)|} y\)}
		\DisplayProof
	\end{equation*}

	\textbf{Case \(\varphi = \varphi_1 \lor \varphi_2\):} We have
	\begin{align*}
		\varepsilon & = |\interp{\varphi_1 \lor \varphi_2}(x) - \interp{\varphi_1 \lor \varphi_2}(y)|                             \\
		            & = |\max(\interp{\varphi_1}(x),\interp{\varphi_2}(x)) - \max(\interp{\varphi_1}(y),\interp{\varphi_2}(y))|   \\
		            & \leq \max(|\interp{\varphi_1}(x) - \interp{\varphi_1}(y)|, |\interp{\varphi_2}(x) - \interp{\varphi_2}(y)|)
	\end{align*}
	so that we take \(\pt{\varphi,x,y}\) to be
	\begin{equation*}
		\AxiomC{\(\pt{\varphi_i,x,y}\)}
		\AxiomC{\(\varepsilon \leq \varepsilon'\)}
		\RightLabel{(weak)}
		\BinaryInfC{\(x \apart{\varepsilon} y\)}
		\DisplayProof
	\end{equation*}
	with \(\varphi_i\) the formula yielding the above maximum, which we have called \(\varepsilon'\).

	\textbf{Case \(\varphi = \bigcirc \psi\):} 
	Then \(\varepsilon = | \tau(x) \cdot \interp{\psi} - \tau(y) \cdot \interp{\psi} |\) and by induction, we have for all \(x',y' \in \supps\) with \(\interp{\psi}(x') > \interp{\psi}(y')\) trees \(\pt{\psi,x',y'}\).
	Now we must distinguish two cases. If \(\tau(x) \cdot \interp{\psi} \geq \tau(y) \cdot \interp{\psi}\), we take \(h = \interp{\psi}|_\supps\) and construct
	\begin{equation*}
		\AxiomC{\(\interp{\psi}|_\supps \colon \supps \to \UIQ\)}
		\AxiomC{\(\{\pt{\psi,x',y'} \mid h(x') > h(y')\}\)}
		\AxiomC{\(\varepsilon = \tau(x) \cdot \interp{\psi} - \tau(y) \cdot \interp{\psi}\)}
		\RightLabel{(exp)}
		\TrinaryInfC{\(x \apart{\varepsilon} y\)}
		\DisplayProof
	\end{equation*}
	Otherwise, we take \(h = \interp{\lnot \psi}|_\supps\), and have
	\begin{equation*}
		\def\defaultHypSeparation{\hskip .2cm}
		\AxiomC{\(\interp{\lnot \psi}|_\supps \colon \supps \to \UIQ\)}
		\AxiomC{\(\{\pt{\psi,x',y'} \mid h(x') > h(y')\}\)}
		\AxiomC{\(\varepsilon = \tau(x) \cdot \interp{\lnot \psi} - \tau(y) \cdot \interp{\lnot \psi}\)}
		\noLine
		\UnaryInfC{\(= \tau(y) \cdot \interp{\varphi} - \tau(x) \cdot \interp{\varphi}\)}
		\RightLabel{(exp)}
		\TrinaryInfC{\(x \apart{\varepsilon} y\)}
		\DisplayProof
	\end{equation*}
\end{proofappendix}
\begin{remark}
	The depth of the constructed proof matches the modal depth of the formula, i.e., the maximum number of nested \(\bigcirc\) modalities. In this sense, the proof does not grow unexpectedly compared to the formula we start with. Due to the branching in the \emph{(exp)} rule, the number of rules we apply will be larger than the number of operators in a given formula in general.
	For an example, recall the LMC from \cref{ex:LMC}, and consider the formula \(\bigcirc a\). This has just two operators, but the proof generated by the above procedure will contain five recursive proof obligations generated by \(\interp{\bigcirc a}\) in the application of \emph{(exp)}, each requiring at least one rule application.
\end{remark}

\section{Constructing formulas from proofs}\label{sec:ftop}

To complete the correspondence between proofs and modal formulas, in this section we provide a construction going from a proof of \(x \apart{\varepsilon} y\), to a formula \(\varphi\) such that \(|\interp{\varphi}(x) - \interp{\varphi}(y)| = \varepsilon\).
In fact, we construct formulas whose interpretation on \(x\) is equal to the lower bound, and whose interpretation on \(y\) is zero. This is required for our recursive construction.

Our construction is inspired by that of~\cite{DBLP:conf/fossacs/RadyB23}, which relies on the following lemma:
\begin{lemma}\label{lem:notstoneweierstrass}
	Let \(f \colon X \to [0,1]\). If for any \(x,y \in X\), we have a function \(g_{xy} \colon X \to [0,1]\) such that \(g_{xy}(x) = f(x)\) and \(g_{xy}(y) = f(y)\)
	then \(f = \max_x \min_y g_{xy} = \min_x \max_y g_{xy}\).
\end{lemma}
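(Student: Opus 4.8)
The plan is to verify the two claimed identities $f = \max_x \min_y g_{xy}$ and $f = \min_x \max_y g_{xy}$ directly, exploiting the two defining properties of the family $(g_{xy})$: that $g_{xy}(x) = f(x)$ and $g_{xy}(y) = f(y)$. I will carry out the argument for $f = \min_x \max_y g_{xy}$ in detail; the other identity follows by an entirely symmetric argument (swapping the roles of $\min$ and $\max$, and of the two coordinates of $g$). Fix an arbitrary point $z \in X$ at which we evaluate both sides.

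First I would prove the inequality $\min_x \max_y g_{xy}(z) \leq f(z)$. To see this, instantiate the outer $\min$ over $x$ at the particular choice $x = z$: then $\min_x \max_y g_{xy}(z) \leq \max_y g_{zy}(z)$. Now for every $y$ we have $g_{zy}(z) = f(z)$ by the first defining property (with the roles named so that the first subscript matches the evaluation point), so $\max_y g_{zy}(z) = f(z)$, giving the desired bound. Next I would prove the reverse inequality $f(z) \leq \min_x \max_y g_{xy}(z)$, i.e.\ that for every $x$ we have $f(z) \leq \max_y g_{xy}(z)$. Here instantiate the inner $\max$ over $y$ at $y = z$: then $\max_y g_{xy}(z) \geq g_{xz}(z) = f(z)$ by the second defining property (second subscript matching the evaluation point). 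Combining the two inequalities yields $\min_x \max_y g_{xy}(z) = f(z)$, and since $z$ was arbitrary, $f = \min_x \max_y g_{xy}$.

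The identity $f = \max_x \min_y g_{xy}$ is obtained the same way: for the bound $\max_x \min_y g_{xy}(z) \geq f(z)$, instantiate the outer $\max$ at $x=z$ and use $g_{zy}(z) = f(z)$ for all $y$; for $\max_x \min_y g_{xy}(z) \leq f(z)$, note that for each $x$ the inner $\min$ at $y=z$ gives $\min_y g_{xy}(z) \leq g_{xz}(z) = f(z)$. One small point worth stating explicitly is that all these extrema exist as honest maxima and minima rather than merely suprema and infima: each is attained by the instantiation used in the proof (e.g.\ the outer $\min$ in $\min_x \max_y g_{xy}$ is attained at $x = z$), which is exactly why the statement can be phrased with $\max$ and $\min$ even when $X$ is infinite — no compactness or finiteness hypothesis is needed.

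There is essentially no obstacle here; the lemma is a clean instance of the standard min-max / sup-inf trick, and the only thing to be careful about is keeping the two subscripts of $g_{xy}$ straight and consistently matching whichever subscript coincides with the evaluation point to the outer quantifier in one identity and to the inner quantifier in the other. The content of the lemma for the paper lies not in its proof but in its application: in the next section it will be used with $f$ the interpretation of the formula-to-be-constructed, $X$ replaced by the (finite) support set $\supps$, and the $g_{xy}$ built recursively from formulas for sub-proofs, so that the $\max$/$\min$ combination translates into the $\lor$/$\land$ connectives of $\Lnot$.
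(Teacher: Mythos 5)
Your proof is correct: both inequalities in each identity follow exactly as you say, by instantiating the outer quantifier at the evaluation point $z$ and using $g_{zy}(z)=f(z)$, respectively instantiating the inner quantifier at $z$ and using $g_{xz}(z)=f(z)$. The paper does not actually spell out a proof of this lemma --- it only cites a more general continuous version in Ash's real analysis text --- and your direct elementary argument is the standard one that reference contains, so there is nothing to compare beyond noting that you have made explicit what the paper delegates. One tiny caveat: your remark that \emph{every} extremum is attained is slightly too strong for infinite $X$ (for a fixed $x\neq z$ the inner $\sup_y g_{xy}(z)$ need not be attained, only bounded below by $f(z)$), but the identities hold with the $\sup/\inf$ reading, and in the paper's application $X$ is a finite support set, so this is immaterial.
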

A proof of a more general (continuous) version of this result can be found in~\cite[Lemma~A7.2]{ash2014real}, where it is used in the proof of the Stone-Weierstrass Theorem.

The usefulness of this lemma is perhaps not very intuitive; the idea is that given an application of the \emph{(exp)} rule, we may recursively assume formulas \(\varphi_{x' y'}\) capturing the distance between successor states \(x',y'\), i.e., \(\varphi_{x' y'}(x') = h(x') - h(y')\) and \(\varphi_{x' y'}(y') = 0\).
We thus have formulas whose interpretations match the function \(h\), but only for pairs of states, so that a lemma similar to the above is required to combine them into a single formula. 

Note that, due to the form of our \emph{(exp)} rule, we can not assume in an inductive proof that formulas are given for all successors; only those \(x',y'\) for which \(h(x') > h(y')\). It is possible to recover all other pairs using the \emph{(zero)} and \emph{(symm)} rules, but we wish to keep the formulas as small as we can. For this, we prove the following stronger version of \cref{lem:notstoneweierstrass}.
\begin{restatable}{lemma}{ordnotstoneweierstrass}\label{lem:ordnotstoneweierstrass}
	Let \(f \colon X \to [0,1]\). If for any \(x,y \in X\) such that \(f(x) \geq f(y)\), we have a function \(g_{xy} \colon X \to [0,1]\) such that:
	\begin{enumerate*}
		\item \(g_{xy}(x) = f(x)\)
		\item \(g_{xy}(y) = f(y)\)
		\item \(\forall z \in X \ldotp g_{xy}(z) \geq f(y)\)
		\item \(\forall z \in X \ldotp g_{xx}(z) = f(x)\)
	\end{enumerate*},
	then \(f = \max_{x} \min_{y : f(x) \geq f(y)} g_{xy}\).
\end{restatable}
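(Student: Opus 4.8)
The plan is to mimic the proof of the classical Lemma~\ref{lem:notstoneweierstrass} (the ``not-quite Stone--Weierstrass'' lemma), but carefully tracking which indices are admissible so that we only ever invoke $g_{xy}$ when $f(x) \ge f(y)$. Fix an arbitrary $w \in X$; I want to show $f(w) = \max_{x}\min_{y : f(x)\ge f(y)} g_{xy}(w)$, where the outer $\max$ ranges over all $x$ (for each such $x$ the inner set $\{y : f(x)\ge f(y)\}$ is nonempty since it contains $x$ itself, so the expression is well defined).

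First I would prove the lower bound $f(w) \le \max_x \min_{y:f(x)\ge f(y)} g_{xy}(w)$. The natural witness for the outer maximum is $x := w$. For this choice, consider any admissible $y$, i.e.\ any $y$ with $f(w) \ge f(y)$; I claim $g_{wy}(w) \ge f(w)$. Indeed, by property~(1), $g_{wy}(w) = f(w)$, so in fact $\min_{y : f(w)\ge f(y)} g_{wy}(w) = f(w)$ exactly (every term in the min equals $f(w)$ by property~(1)). Hence taking $x = w$ already gives the value $f(w)$, so the max is $\ge f(w)$.

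Second, the upper bound $f(w) \ge \max_x \min_{y:f(x)\ge f(y)} g_{xy}(w)$. Take an arbitrary $x$; I must show $\min_{y : f(x)\ge f(y)} g_{xy}(w) \le f(w)$, and it suffices to exhibit one admissible $y$ with $g_{xy}(w) \le f(w)$. Here I split on whether $f(x) \ge f(w)$. If $f(x) \ge f(w)$, then $y := w$ is admissible, and $g_{xw}(w) = f(w)$ by property~(2), giving the bound with equality. If instead $f(x) < f(w)$, I use $y := x$, which is trivially admissible ($f(x)\ge f(x)$); then by property~(4), $g_{xx}(w) = f(x) < f(w) \le f(w)$, again done. (Property~(3) is not strictly needed for this two-valued case, but it is the natural hypothesis guaranteeing the analogous $\min$--$\max$ identity and the compatibility needed downstream when the lemma is applied with $f = h$ and the $g_{xy}$ arising from recursively constructed formulas; I would remark on this.) Combining the two inequalities yields $f(w) = \max_x \min_{y:f(x)\ge f(y)} g_{xy}(w)$ for every $w$, i.e.\ the claimed identity of functions.

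The only genuinely delicate point is the bookkeeping of admissibility: one must check at each use of a $g_{xy}$ that the pair lies in the declared domain $\{(x,y) : f(x)\ge f(y)\}$, and the case split on $f(x)$ versus $f(w)$ in the upper-bound direction is exactly what makes this work without needing $g$ on inadmissible pairs. Everything else is a routine unravelling of $\max$/$\min$ over finite (or arbitrary, using $\sup$/$\inf$ and the attainment guaranteed by properties~(1),(2)) index sets, so I would not belabour it.
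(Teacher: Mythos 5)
Your proof is correct, and it takes a genuinely different, more elementary route than the paper's. The paper reduces the statement to \cref{lem:notstoneweierstrass}: it symmetrises the family by setting \(k_{xy} = g_{xy}\) when \(f(x)\geq f(y)\) and \(k_{xy}=g_{yx}\) otherwise, invokes that lemma to get \(f=\max_x\min_y k_{xy}\), and then proves \(\max_x\min_y k_{xy} = \max_x\min_{y\colon f(x)\geq f(y)} g_{xy}\) by two inequalities with some quantifier bookkeeping; that last step is where conditions (3) and (4) are used. You instead argue pointwise and directly: for the lower bound you take \(x=w\) and use condition (1); for the upper bound you take \(y=w\) when \(f(x)\geq f(w)\) (using condition (2)) and \(y=x\) otherwise (using condition (4)). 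Both arguments are sound. What yours buys is brevity, independence from \cref{lem:notstoneweierstrass}, and the observation---which you make explicitly---that condition (3) is not needed for the statement as given, whereas the paper's proof does invoke it in the \(\geq\) direction; so your argument in fact establishes a slightly stronger lemma. What the paper's route buys is the structural point that the new lemma is a refinement of the old one: restricting the inner minimum to the admissible pairs does not change the value of the symmetrised max--min. Your aside about \(\max/\min\) versus \(\sup/\inf\) is a reasonable precaution, though in the only application (\(X=\supps\) in \cref{thm:optptof}) the index set is finite, so nothing is lost.
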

\begin{proofappendix}{ordnotstoneweierstrass}
	We first define
	\begin{align*}
		k_{xy} & = \begin{cases}
			           g_{xy} & \text{if } f(x) \geq f(y) \\
			           g_{yx} & \text{if } f(x) < f(y)
		           \end{cases}
	\end{align*}
	Note that these \(k_{xy}\) satisfy the conditions of \cref{lem:notstoneweierstrass} so that \( f = \max_x \min_y k_{xy} \). It thus suffices to prove that
	\begin{equation*}
		\max_x \min_y k_{xy} = \max_x \min_{y : f(x) \geq f(y)} g_{xy}
	\end{equation*}
	We prove this by proving two inequalities.

	\subparagraph*{\(\leq\):} 
	Consider the inequality and simplify as follows:
	\begin{align*}
	& \forall z\in X\ldotp \max_x \min_y k_{xy}(z) \leq \max_x \min_{y\colon f(x) \geq f(y)} g_{xy}(z)\\
	\iff & \forall z\in X \ldotp \forall u_1 \in X \ldotp \min_y k_{u_1y}(z) \leq \max_x \min_{y\colon f(x) \geq f(y)} g_{xy}(z)\\
	\iff & \forall z\in X \ldotp \forall u_1 \in X \ldotp \exists u_3\in X\ldotp \min_y k_{u_1y}(z) \leq \min_{y\colon f(u_3) \geq f(y)} g_{u_3y}(z)\\
	\iff & \forall z\in X \ldotp \forall u_1 \in X \ldotp \exists u_3\in X\ldotp \forall u_4 \in X\ldotp f(u_3)\geq f(u_4) \implies \min_y k_{u_1y}(z) \leq g_{u_3u_4}(z)\\
	\iff & \forall z\in X \ldotp \forall u_1 \in X \ldotp \exists u_3\in X\ldotp \forall u_4 \in X\ldotp f(u_3)\geq f(u_4) \implies \exists u_2\in X \ldotp k_{u_1u_2}(z) \leq g_{u_3u_4}(z)
	\end{align*}

	So, let \(z, u_1 \in X\) and take \(u_3 = z\). Further, let \(u_4 \in X\) such that \(f(u_3) \geq f(u_4)\) and take \(u_2 = z\). Then
	\begin{align*}
		k_{u_1 u_2}(z) & = k_{u_1 z}(z) = f(z) \\
		g_{u_3 u_4}(z) & = g_{z u_4}(z) = f(z)
	\end{align*}
	This first inequality thus holds.

	\subparagraph*{\(\geq\):} 
	Consider the inequality and simplify as follows:
	\begin{align*}
	& \forall z\in X\ldotp \max_x \min_y k_{xy} (z) \geq \max_x \min_{y : f(x) \geq f(y)} g_{xy} (z)\\
	\iff & \forall z\in X\ldotp \forall u_3\in X\ldotp \max_x \min_y k_{xy} (z) \geq \min_{y : f(u_3) \geq f(y)} g_{u_3y} (z)\\
	\iff & \forall z\in X\ldotp \forall u_3\in X\ldotp \exists u_1\in X \ldotp \min_y k_{u_1y} (z) \geq \min_{y : f(u_3) \geq f(y)} g_{u_3y} (z)\\
	\iff & \forall z\in X\ldotp \forall u_3\in X\ldotp \exists u_1\in X \ldotp \forall u_2 \in X\ldotp  k_{u_1u_2} (z) \geq \min_{y : f(u_3) \geq f(y)} g_{u_3y} (z)\\
	\iff & \forall z\in X\ldotp \forall u_3\in X\ldotp \exists u_1\in X \ldotp \forall u_2 \in X\ldotp  \exists u_4 \in X\ldotp f(u_3) \geq f(u_4) \land  k_{u_1u_2} (z) \geq  g_{u_3u_4 } (z).
	\end{align*}
	
	So, we let \(z, u_3 \in X\) and take \(u_1 = u_3\). Now let \(u_2 \in X\). We distinguish two further cases:
	\subparagraph*{\(f(u_2) > f(u_3)\):} Here we take \(u_4 = u_3\) and have
	\begin{align*}
		k_{u_1 u_2}(z) & = k_{u_3 u_2}(z) = g_{u_2 u_3}(z) \stackrel{(3)}{\geq} f(u_3) \stackrel{(4)}{=} g_{u_3 u_3}(z) = g_{u_3 u_4}(z)
	\end{align*}
	\subparagraph*{\(f(u_2) \leq f(u_3)\):} Here we take \(u_4 = u_2\) and see that
	\begin{align*}
		k_{u_1 u_2}(z) & = k_{u_3 u_2}(z) = g_{u_3 u_2}(z) = g_{u_3 u_4}(z)
	\end{align*}
	This concludes the case distinctions.
\end{proofappendix}

The proof is a little involved and not very informative, however it allows us to construct (potentially) smaller formulas witnessing distances as compared to~\cite{DBLP:conf/fossacs/RadyB23}.
This is because we do not require formulas distinguishing all pairs of reachable states in the construction of the next theorem.
We discuss the improvement in size at the end of this section, and illustrate it in \cref{ex:RvBBig,ex:randwalk}.

\begin{restatable}{theorem}{optptof}\label{thm:optptof}
	For any LMC \((X,L,\tau,l)\), any \(\varepsilon \in \UIQ\), and any \(x,y \in X\), if we have a proof of \(x \apart{\varepsilon} y\) using the rules of \cref{def:proofrules}, then there is a formula \(\varphi_{xy} \in \Lnot\) such that \(\interp{\varphi_{xy}}(x) = \varepsilon\) and \(\interp{\varphi_{xy}}(y) = 0\).
\end{restatable}
Note that we abuse notation, and use \(x \apart{\varepsilon} y\) to refer to both a judgment in a proof, and a proof tree with this judgment at the root.
\begin{proofappendix}{optptof}
	This is by induction on the structure of the proof.

	\textbf{Case \emph{(zero)}:} In this case we take \(\varphi_{xy} = \false\). We have \(\interp{\varphi_{xy}}(z) = 0\) for any \(z \in X\), yielding the desired interpretations.

	\textbf{Case \emph{(lab)}:} Here, we take \(\varphi_{xy} = l(x)\). We must have \(\interp{l(x)}(y) = 0\) as \(l(x) \neq l(y)\), so that the interpretations are as required.

	\textbf{Case \emph{(symm)}:} The induction hypothesis gives a formula \(\varphi_{yx}\). Taking \(\varphi_{xy} = \lnot \varphi_{yx} \ominus (1-\varepsilon)\), we have \(\interp{\varphi_{xy}}(x) = (1 - \interp{\varphi_{yx}}(x)) \ominus (1-\varepsilon) = \varepsilon\) and \(\interp{\varphi_{xy}}(y) = (1 - \interp{\varphi_{yx}}(y)) \ominus (1-\varepsilon) = 0\) as desired.

	\textbf{Case \emph{(weak)}:} The induction hypothesis here gives \(\varphi_{xy}'\) with \(\interp{\varphi_{xy}'}(x) = \varepsilon'\) and \(\interp{\varphi_{xy}'}(y) = 0\) and \(\varepsilon' \geq \varepsilon\). This means we can take \(\varphi_{xy} = \varphi_{xy}' \ominus (\varepsilon' - \varepsilon)\) and have \(\interp{\varphi_{xy}}(x) = \varepsilon' \ominus (\varepsilon' - \varepsilon) = \varepsilon\) and \(\interp{\varphi_{xy}}(y) = 0 - (\varepsilon' - \varepsilon) = 0\).

	\textbf{Case \emph{(exp)}:} 
	Suppose we have a proof of the form
	\begin{equation*}
		\AxiomC{\(h \colon \supps \to \UIQ\)}
		\noLine
		\UnaryInfC{\(\forall x', y' \in \supps \ldotp h(x') > h(y') \implies x' \apart{h(x') - h(y')} y'\)}
		\AxiomC{\(\tau(x) \cdot h - \tau(y) \cdot h \geq \varepsilon\)}
		\BinaryInfC{\(x \apart{\varepsilon} y\)}
		\DisplayProof
	\end{equation*}
	By induction, we have formulas \(\varphi_{x'y'}\) such that \(\interp{\varphi_{x'y'}}(x') = h(x') - h(y')\) and \(\interp{\varphi_{x'y'}}(y') = 0\) for all \(x',y' \in \supps\) with \(h(x') > h(y')\). For those \(x',y'\) such that \(h(x') = h(y')\) we define \(\varphi_{x'y'} := \false\) (any formula which is everywhere zero can be used). Using these we construct, for \(x',y'\) such that \(h(x') \geq h(y')\), the formulas \(\psi_{x'y'}^h := \varphi_{x'y'} \oplus h(y')\).

	We now claim that the interpretations \(\interp{\psi_{x'y'}^h}\) satisfy the conditions of \cref{lem:ordnotstoneweierstrass}. The first two clearly hold.
	For the third, note that \(\interp{\varphi_{x'y'}}(z) \geq 0\) for any \(z\), so that indeed
	\begin{align*}
		\interp{\psi_{x'y'}^h}(z) & = \interp{\varphi_{x'y'}}(z) \oplus h(y') \geq 0 \oplus h(y') = h(y')
	\end{align*}
	For the fourth, we see that for any \(z \in X\):
	\begin{align*}
		\interp{\psi_{x'x'}^h}(z) & = \interp{\false \oplus h(x')}(z) = h(x')
	\end{align*}
	We now define
	\begin{align*}
		\varphi_{xy}^h := \bigvee_{x'} \left[ \left[ \bigwedge_{y' : h(x') > h(y')} \psi_{x'y'}^h \right] \land (\false \oplus h(x')) \right]
	\end{align*}
	This has the same interpretation as \(\bigvee_{x'} \bigwedge_{y' : h(x') \geq h(y')} \psi_{x'y'}^h\), because for pairs \((x',y')\) with \(h(x') = h(y')\), the formula \(\psi_{x'y'}^h\) will be equal to \(\false \oplus h(x')\) by definition.
	Note also that these formulas are finite, as we quantify over \(\supps\). 
	Thus letting \(\varphi_{xy} := \bigcirc \varphi_{xy}^h \ominus (\tau(y) \cdot h)\), yields a (finitary) formula with the desired property.
\end{proofappendix}

One may wonder why we have constructed the formulas \(\varphi_{xy}^h\) as a conjunction over a disjunction, and not vice versa. It turns out that this order matters in the case of our \emph{(exp)} rule, as the following example shows.
\begin{example}
	Consider the following LMC:
	\begin{center}
		\begin{tikzpicture}[shorten >=0pt,node distance=2cm,
				on grid,auto,initial text=,every edge/.append style={bend angle=15}]
			\node[state] (x) {\slab{x_0}{a}} ;
			\node[state] (x1) [left=of x] {\slab{x_1}{a}} ;
			\node[state] (x2) [right=of x] {\slab{x_2}{b}} ;
			\node[state] (y) [right=7cm of x] {\slab{y_0}{a}} ;
			\node[state] (y1) [left=of y] {\slab{y_1}{c}} ;
			\node[state] (y2) [right=of y] {\slab{y_2}{b}} ;

			\path[->]
			(x) edge node[swap] {\(\frac{1}{2}\)} (x1)
			(x) edge node {\(\frac{1}{2}\)} (x2)
			(y) edge node[swap] {\(\frac{1}{2}\)} (y1)
			(y) edge node {\(\frac{1}{2}\)} (y2)
			;
		\end{tikzpicture}
	\end{center}
	For this example, we can prove the bound \(y_0 \apart{\frac{1}{2}} x_0\) using the \emph{(exp)} rule and the map \(h(x) = \textbf{if } x = x_1 \textbf{ then } 0 \text{ else } 1\).
	Constructing the \(\psi_{x'y'}^h\) for only the pairs occurring in recursive proofs yields
	\begin{align*}
		\psi_{x_2 x_1}^h = b \oplus 0 \quad \psi_{y_1 x_1}^h = c \oplus 0 \quad \psi_{y_2 x_1}^h = b \oplus 0
	\end{align*}
	We may now try to construct \(\varphi_{xy}^h\) as \(\bigwedge_{x'} \bigvee_{y : h(x) \geq h(y)} \psi_{x'y'}^h\).
	In the example, this gives \(\varphi_{x_0 y_0}^h \equiv \false\), i.e., the interpretation is zero everywhere thereby not matching the map \(h\).
\end{example}

\subparagraph*{Size of constructed formulas}\label{sec:formsize}
As discussed in the introduction, our constructions together yield an algorithm going from an approximation \(\Gamma^i(\bot)(x,y)\) of the behavioural distance of states, via a proof tree, to a formula \(\varphi_{xy}\). This is an alternative to the construction given as an algorithm in~\cite[Sec.~7]{DBLP:conf/fossacs/RadyB23}.
In the worst case, this procedure will yield formulas whose size is exponential in the size of the corresponding LMC and the depth \(i\) of the approximation. We thus achieve the same asymptotic size complexity as the construction of \emph{op.\ cit.}, however there are large classes of examples for which the optimisations in our proof system lead to smaller formulas (when counting the total number of connectives).
The first example, taken from \emph{op.\ cit.}, will show a notable improvement in size, and will allow us to discuss the shapes of LMCs leading to these improvements.
Our final example applies to an LMC modelling random walks on the natural numbers. This is an infinite state example, which we are still able to capture as it is finitely branching.
\begin{example}\label{ex:RvBBig}
	We will compare the size of formulas obtained via our construction with those obtained in an example of~\cite[Sec.~5]{DBLP:conf/fossacs/RadyB23}. The LMC involved can be represented as follows:
	\begin{center}
		\begin{tikzpicture}[shorten >=0pt,node distance=2cm,
				on grid,auto,initial text=,every edge/.append style={bend angle=15}]
			\node[state] (x_0) {\slab{x_0}{a}} ;
			\node[state] (y_0) [right=8cm of x_0] {\slab{y_0}{a}} ;
			\node[state] (y_2) [below right=1cm and 2cm of y_0] {\slab{y_2}{a}} ;
			\node[state] (x_2) [below right=1cm and 2cm of x_0] {\slab{x_2}{a}} ;
			\node[state] (x_1) [above right=1cm and 2cm of x_0] {\slab{x_1}{a}} ;
			\node[state] (x_3) [right=of x_1] {\slab{x_3}{b}} ;
			\node[state] (x_4) [right=of x_2] {\slab{x_4}{a}} ;
			\node[state] (y_1) [above right=1cm and 2cm of y_0] {\slab{y_1}{a}} ;
			\node[state] (y_3) [right=of y_1] {\slab{y_3}{b}} ;
			\node[state] (y_4) [right=of y_2] {\slab{y_4}{a}} ;

			\path[->]
			(y_0) edge node[swap] {\(\frac{3}{8}\)} (y_2)
			(x_0) edge node[swap] {\(\frac{1}{2}\)} (x_2)
			(y_0) edge node {\(\frac{5}{8}\)} (y_1)
			(x_0) edge node {\(\frac{1}{2}\)} (x_1)
			(x_1) edge node {\(1\)} (x_3)
			(x_2) edge node[swap] {\(1\)} (x_4)
			(y_1) edge node {\(1\)} (y_3)
			(y_2) edge node[swap] {\(1\)} (y_4)
			;
		\end{tikzpicture}
	\end{center}
	It can be computed that \(\Gamma^3(\bot)(x_0,y_0) = \frac{1}{8}\). The construction applied in \cref{thm:approxcomp} gives the map \(h_0 \colon x_1, y_1 \mapsto 0, x_2, y_2 \mapsto 1\) for which \(\tau(x_0) \cdot h_0 - \tau(y_0) \cdot h_0 = \frac{1}{8}\) so that we have:
	\begin{prooftree}
		\AxiomC{\(\vdots\)}
		\UnaryInfC{\(x_2 \apart{1} x_1\)}
		\AxiomC{\(\vdots\)}
		\UnaryInfC{\(x_2 \apart{1} y_1\)}
		\AxiomC{\(\vdots\)}
		\UnaryInfC{\(y_2 \apart{1} x_1\)}
		\AxiomC{\(\vdots\)}
		\UnaryInfC{\(y_2 \apart{1} y_1\)}
		\QuaternaryInfC{\(x_0 \apart{\frac{1}{8}} y_0\)}
	\end{prooftree}
	The recursive lower bounds, given by \(\Gamma^2(\bot)\), all have the same proof tree, up to renaming of states. For \(x_2 \apart{1} y_1\), we get \(h_0 \colon x_4 \mapsto 1, y_3 \mapsto 0\) giving \(\tau(x_4) \cdot h_0 - \tau(y_3) \cdot h_0 = 1\) so that:
	\begin{prooftree}
		\AxiomC{}
		\UnaryInfC{\(x_4 \apart{1} y_3\)}
		\UnaryInfC{\(x_2 \apart{1} y_1\)}
	\end{prooftree}
	The formulas generated from such proofs are
	\begin{equation*}
		\varphi_{x_2 x_1} = \varphi_{x_2 x_1} = \varphi_{x_2 x_1} = \varphi_{x_2 x_1} = \bigcirc [[(a \oplus 0) \land (\false \oplus 1)] \lor [(\false \oplus 0)]] \ominus 0
	\end{equation*}
	Putting everything together, the formula \(\varphi_{x_0 y_0}\) is
	\begin{align*}
		\varphi_{x_0 y_0} = \bigcirc [ & [(\varphi_{x_2 x_1} \oplus 0) \land (\varphi_{x_2 y_1} \oplus 0) \land (\false \oplus 1)] \lor \\
		                               & [(\varphi_{y_2 x_1} \oplus 0) \land (\varphi_{y_2 y_1} \oplus 0) \land (\false \oplus 1)] \lor \\
		                               & [(\false \oplus 0)] \lor [(\false \oplus 0)]] \ominus \frac{3}{8}
	\end{align*}
	This has 8 recursive subformulas compared to the 100 occurring in the formula constructed in~\cite{DBLP:conf/fossacs/RadyB23} and could all be written out within around 5 lines. Clearly, the formula is still not minimal; it can be simplified to \(\bigcirc(\bigcirc a)) \ominus \frac{3}{8}\). However, we see a clear improvement in size.
\end{example}
The main features which allow us to achieve such an improvement in the size of formulas witnessing lower bounds are: the number of states reachable at each step being less than the size of the entire state space; and those successors having non-zero behavioural distance so that the map \(h\) takes many different values. Our restriction to supports and omission of symmetric pairs in recursive proof obligations when applying \emph{(exp)} gives smaller proofs in these cases, which in turn are transformed into smaller formulas.

\begin{example}\label{ex:randwalk}
	We finish with an infinite state example based on random walks on the natural numbers. We model this as an LMC with state space \(\mathbb{N}\) and transitions \(\tau(n) = \frac{1}{2} \ket{n-1} + \frac{1}{2} \ket{n+1}\) for \(n > 0\) and \(\tau(0) = 1 \cdot \ket{0}\). Further, we have labels \(\{a,b\}\) and labelling function \(l(n) = \textbf{if } n = 0 \textbf{ then } b \textbf{ else } a\).

	States \(n < m\) can clearly be distinguished by the probability to reach the state \(0\) with unique label \(b\) in \(n\) steps.
	In fact, this turns out to completely determine the distance between states.
	For example, \(\Gamma^5(\bot)(4,6) = \frac{1}{2^4}\). This corresponds to the interpretation of the formula \(\bigcirc^4 b\) on these states.
	In general, we have for \(n < m\) and \(i > n\), \(\Gamma^i(\bot)(n,m) = \frac{1}{2^n}\).
	We will show the proof constructed for one such bound, as well as the formula constructed from this.

	We have \(\Gamma^3(\bot)(2,3) = \frac{1}{4}\) which can be proved as a lower bound using the map \(h_0 \colon 1 \mapsto \frac{1}{2}, 3 \mapsto 0, 2 \mapsto 0, 4 \mapsto 0\) for which \(\tau(2) \cdot h_0 - \tau(3) \cdot h_0 = \frac{1}{4}\) as follows
	\vspace{-.5cm}
	\begin{prooftree}
		\AxiomC{\(\vdots\)}
		\UnaryInfC{\(1 \apart{\frac{1}{2}} 2\)}
		\AxiomC{\(\vdots\)}
		\UnaryInfC{\(1 \apart{\frac{1}{2}} 3\)}
		\AxiomC{\(\vdots\)}
		\UnaryInfC{\(1 \apart{\frac{1}{2}} 4\)}
		\TrinaryInfC{\(2 \apart{\frac{1}{4}} 3\)}
	\end{prooftree}
	The recursive proofs are all essentially the same, we show only the one for \(1 \apart{\frac{1}{2}} 2\), which uses the map \(h_0 \colon 0 \mapsto 1, 2 \mapsto 0, 1 \mapsto 0, 3 \mapsto 0\) yielding \(\tau(1) \cdot h_0 - \tau(2) \cdot h_0 = \frac{1}{2}\) and
	\vspace{-.5cm}
	\begin{prooftree}
		\AxiomC{\(\vdots\)}
		\UnaryInfC{\(0 \apart{1} 1\)}
		\AxiomC{\(\vdots\)}
		\UnaryInfC{\(0 \apart{1} 2\)}
		\AxiomC{\(\vdots\)}
		\UnaryInfC{\(0 \apart{1} 3\)}
		\TrinaryInfC{\(1 \apart{\frac{1}{2}} 2\)}
	\end{prooftree}
	For these recursive proofs, the corresponding formulas are
	\begin{align*}
		\varphi_{12} = \varphi_{13} = \varphi_{14} = \bigcirc [ & [(b \oplus 0) \land (b \oplus 0) \land (b \oplus 0) \land (\false \oplus 1)] \lor    \\
		                                                        & [\false \oplus 0] \lor [\false \oplus 0] \lor [\false \oplus 0]] \ominus 0
	\end{align*}
	From these, we obtain
	\begin{align*}
		\varphi_{23} = \bigcirc \left[ \vphantom{\frac{1}{2}} \right. & \left[(\varphi_{12} \oplus 0) \land (\varphi_{13} \oplus 0) \land (\varphi_{14} \oplus 0) \land \left(\false \oplus \frac{1}{2}\right)\right] \lor \\
		                                                              & [\false \oplus 0] \lor [\false \oplus 0] \lor [\false \oplus 0]] \ominus 0
	\end{align*}
	This can be simplified to \(\bigcirc [\bigcirc b \land (\false \oplus \frac{1}{2})]\), which is not in general equivalent to \(\bigcirc \bigcirc b\), but does have the required property.
	We thus obtain evidence for differences in the behaviour of states even in a system with an infinite state space.
\end{example}

\section{Conclusions and Future Work}\label{sec:conclude}
We have given a derivation system for lower bounds on behavioural distances between states in labelled Markov chains, with proofs of soundness and approximate completeness with respect to a least fixed point definition of the behavioural distance.
The choice of the definition based on non-expansive maps was made specifically to allow the definition of a proof system, with the commonly used alternative definition based on couplings not immediately yielding a proof principle for lower bounds. The definitions are equivalent, by Kantorovich-Rubinstein duality~\cite{kantorovich1958space}. This duality arises more generally when defining equivalences and their apartness counterparts via liftings, as was noted in earlier work on \emph{behavioural apartness}~\cite{DBLP:conf/cmcs/TurkenburgBKR24}.\ruben{anonymised}
We further showed a close correspondence between proofs in our system and formulas in a modal logic, and compared this to the constructions in~\cite{DBLP:conf/fossacs/RadyB23} going between finite approximations of distances and formulas in the same logic. We see quite some avenues for future work, and sketch some of them here.

Definitions of behavioural distances have been given for a variety of other system types, both metric and probabilistic, e.g.: Metric LTSs~\cite{DBLP:conf/concur/Breugel05}; and Markov decision processes with finite~\cite{DBLP:conf/aaai/FernsPP04} and infinite state spaces~\cite{DBLP:conf/uai/FernsPP05}.
A natural extension would be to generalise our results in case we change the system type while keeping a similar definition of distance between distributions; however we may also consider adapted notions of distance, such as the total variation distance studied for LMCs in~\cite{DBLP:conf/csl/ChenK14}, or other statistical metrics/divergences such as the L\'evy-Prokhorov metric~\cite{prokhorov1956convergence}
or Kullback-Leibler divergence~\cite{10.1214/aoms/1177729694}.

A more general option is to take a coalgebraic view and use the definition of codensity lifting~\cite{DBLP:journals/lmcs/KatsumataSU18,DBLP:journals/logcom/SprungerKDH21} and its suitability for capturing quantitative notions of equivalence to provide a sound and complete derivation system for many of these systems at once. The existing work on corresponding expressive logics~\cite{DBLP:conf/lics/KomoridaKKRH21} then gives us a starting point for providing a general version of the construction in \cref{sec:ftop}.
Another approach in the same vein is to use the theory of Kantorovich functors developed in~\cite{DBLP:conf/fossacs/GoncharovHNSW23}, used to obtain characteristic logics also in the quantitative setting.

We would also like to investigate the connection to strategies in quantitative bisimulation games studied in~\cite{DBLP:conf/concur/KonigM18,explaingames} and developed for codensity bisimulations in~\cite{DBLP:journals/ngc/KomoridaKHKHEH22}.

In all these cases, the efficient computation of proofs and distinguishing formulas would be an interesting extension.
There are a number of works in the qualitative setting (beyond the already discussed~\cite{DBLP:conf/fossacs/RadyB23}) which could provide inspiration.
For LTSs and branching bisimulation, computing (minimal) distinguishing formulas has been investigated by Martens and Groote~\cite{DBLP:conf/concur/0001G23,DBLP:conf/birthday/0001G24}.
K\"onig, Mika-Michalski and Schr\"oder use coalgebraic techniques to develop algorithms for computing strategies in bisimulation games and transforming these into distinguishing formulas~\cite{DBLP:conf/cmcs/0001MS20}.
Wi{\ss}mann, Milius and Schr\"oder give a coalgebraic algorithm related to partition refinement which constructs modal formulas characterising behavioural equivalence classes~\cite{DBLP:journals/lmcs/WissmannMS22}.

An alternative approach to improving the robustness of probabilistic bisimilarity, are approximate bisimulations, such as: \(\varepsilon\)-bisimilarity~\cite{DBLP:conf/qest/DesharnaisLT08}; \(\varepsilon\)-APB~\cite{DBLP:conf/hybrid/DInnocenzoAK12}; and \(\varepsilon\)-lumpability~\cite{buchholz1994exact}. These are often close to existing qualitative definitions, with some degree of error introduced. For a recent overview, and extension to weak and branching bisimulation, see~\cite{DBLP:conf/concur/SporkBKPQ24}. It would be interesting to compare these approximate notions to distances and relate them to proofs and logics.



\bibliography{bib.bib}

\newpage
\ifthenelse{\boolean{showappendix}}{%
\appendix
\ifthenelse{\boolean{proofsinappendix}}{%
	\section{Omitted Proofs}
	\closeoutputstream{proofstream}
	\input{\jobname-proofs.out}
}{}
}{}

\end{document}